\documentclass[12pt]{article}

\sloppy

\usepackage{srcltx}
\usepackage[dvips,letterpaper]{geometry}
\usepackage{graphicx,psfrag,epsf}
\usepackage{times}
\usepackage{fullpage}
\usepackage{setspace}
\usepackage{amsmath,amssymb,amsfonts}
\usepackage{bm}
\usepackage{graphicx}
\usepackage{epsfig}
\usepackage{caption}
\usepackage{subfigure}
\usepackage{url}
\usepackage{verbatim}
\usepackage{natbib}
\usepackage{color}
\usepackage{booktabs}
\usepackage{graphicx,psfrag,epsf}
\usepackage{lscape}
\usepackage{adjustbox}
\usepackage{rotating}
\usepackage{pdflscape}

\input cyracc.def


\setlength{\topmargin}{0cm}       
\setlength{\oddsidemargin}{0cm}   
\setlength{\evensidemargin}{0cm}  
\setlength{\textwidth}{170mm}     
\setlength{\textheight}{225mm}    
\setlength{\footskip}{0.75cm}        
\setlength{\parindent}{0.5cm}     

\newtheorem{Proposition}{Proposition}[section]
\newtheorem{Remark}{Remark}[section]

\newtheorem{proof}{Proof}

\title{The shared weighted Lindley frailty model for cluster failure time data}

\author{Diego I. Gallardo$^1$ \,and\, Marcelo Bourguignon$^2$\thanks{Corresponding
author: Marcelo Bourguignon. Department of Statistics, Universidade Federal do Rio Grande do Norte, Natal, RN, Brazil. Email: m.p.bourguignon@gmail.com.}   \\
{\footnotesize $^1$Department of Mathematics, Facultad de Ingenier\'ia, Universidad de Atacama, Copiap\'o, Chile}\\[-0.15cm]
{\footnotesize $^2$Department of Statistics, Universidade Federal do Rio Grande do Norte, Natal, RN, Brazil}\\[-0.15cm]
}

\date{}

\begin{document}

\maketitle

\vspace{-0.7cm}
\begin{abstract}
The primary goal of this paper is to introduce a novel frailty model based on the weighted Lindley (WL) distribution for modeling clustered survival data. We study the statistical properties of the proposed model. In particular, the amount of unobserved heterogeneity is directly parameterized on the variance of the frailty distribution such as gamma and inverse Gaussian frailty models. Parametric and semiparametric versions of the WL frailty model are studied. 
A simple expectation-maximization (EM) algorithm is proposed for parameter estimation. Simulation studies are conducted to evaluate its finite sample
performance. Finally, we apply the proposed model to a real data set to analyze times after surgery
in patients diagnosed with colorectal cancer and compare our results with classical frailty models carried out in this application, which shows the superiority of the proposed model. We implement an R package that includes estimation for fitting the proposed model based on the EM-algorithm.
\vspace{-0.3cm}
\paragraph{Keywords} Clustered survival data; EM-algorithm; Frailty models; Gamma frailty model; Weighted Lindley distribution.

\end{abstract}

\section{Introduction}
\noindent

In survival analysis, when unobserved sources of heterogeneity are present in the data,
the usual statistical approach known as Cox proportional hazards model \citep{COX} is not appropriate.
In this case, frailty models \citep{Vaupel1979} can be used for modeling unobserved
heterogeneity among subjects or groups, which is usually due to random effects and/or
omitted covariates in the study. Frailty models are characterized by the inclusion of a latent
random effect containing data that cannot be measured or have not been observed.

Various authors discussed frailty models. The gamma \citep{Vaupel1979, Congdon1995} and
inverse Gaussian \cite[IG; ][]{Hougaard84, Manton86} distributions
are the most commonly used frailty distributions because of their mathematical convenience. However, the gamma distribution
has just a monotone hazard rate function. On the other hand, the IG has an upside-down bathtub hazard rate function.
\cite{Hougaard86a} used the positive stable distribution for the frailty, but its density function is intractable.
Other possibilities are the log-normal \citep{Flinn1982} and Birnbaum-Saunders \cite{Leao17} distributions.
The log-normal frailty model does not have a known Laplace transform, thus the likelihood function becomes intractable.
The Birnbaum-Saunders frailty model has a mathematically tractable Laplace transform, but its variance is limited
\citep{mota21}. Excellent reviews of frailty models are given by \cite{Wienke2011}, and \cite{Hanagal}.

Shared or clustered failure time data \citep{Hougaard86b} are very common in survival analysis, and
the idea of frailty models framework can be extended to the shared case.
In this context, many distributions have been considered in the literature.
Generalized gamma has been introduced as frailty distribution by \cite{balagg}.
\cite{Balakrishnan18} proposed the semi-parametric likelihood inference for the shared Birnbaum-Saunders frailty model.
Recently, \cite{Barreto2019} and \cite{Piancastelli2021} proposed the generalized exponential (EE) and the
generalized inverse Gaussian (GIG) frailty models, respectively, for clustered survival data.
However, in last two models do not fixed the mean of the frailty distribution at 1 as usually is used in this context. In these cases, we cannot compare frailty terms with the usual models, such as gamma and IG
frailty models. Furthermore, in both models, the derivatives of the Laplace transform do not have a closed form, which difficult its application for data with clusters with a large number of observations. In addition, as the frailty terms are not centered at the same point, comparing the variances does not make sense either.

\cite{GHITANY20111190} introduced the two-parameter weighted Lindley (shortly WL) distribution in order to model failure time such as the Birnbaum-Saunders, gamma, inverse Gaussian, lognormal and Weibull distributions.
The probability density function of the WL model adds an extra shape that can be useful for modeling bimodal data, which cannot be modeled using the gamma, IG or Weibull distributions. Furthermore, the WL model has a bathtub or an increasing hazard rate function depending on the values of its parameters, which cannot be modeled using the gamma, IG or Weibull distributions. These characteristics of the WL distribution motivate us to use it as frailty distribution.
In this context, the application of the WL distribution in frailty models in a univariate context was considered in \cite{mota21}. Recently, \cite{Tyagi21} studied the bivariate case.

In this paper, we use the WL model as
the frailty distribution for clustered survival data. Both parametric and semiparametric versions of the WL frailty model are studied within the proportional hazards model to come up with a flexible frailty model. It has a closed-form for the conditional likelihood function, given
the observed data, so that an EM algorithm can be applied effectively to obtain
the maximum likelihood (ML) estimates.
Hereunder, we list some of the main contributions and advantages of the proposed frailty model.
\begin{enumerate}
\item Mathematical simplicity of our model: the unconditional density, survival, and hazard functions related to the WL frailty model have closed forms and are very simple. Furthermore, the conditional distribution of frailties among the survivors and the frailty of individuals dying at time $t$ can be explicitly determined (WL distributed). Finally, the derivatives of the Laplace transform for the WL distribution has closed-form in contrast with EE, GIG, and other models;

\item Properties simplicity: the probability and distribution functions of the WL model have a simple form in contrast with other frailty models which have associated probability function involving special functions (beta or modified Bessel functions);

\item Flexibility: the WL model is suitable for modeling right skewed positive data with bathtub shaped hazard rate function, which cannot be modeled using the gamma or Weibull distributions. Furthermore, WL distribution is suitable for modeling bimodal data which cannot be modeled using the gamma, IG or Weibull distributions;

\item Special case: the Lindley frailty model is a special case of the WL frailty model;

\item Model estimation: we found the ML estimators through an expectation-maximization (EM) algorithm. In particular, we provide a simple EM-algorithm, since all conditional expectations involved in the E-step are obtained in explicit form;

\item Applications: the Monte Carlo simulations and empirical application show the good performance of the proposed frailty model (see Section \ref{aplic}).
\end{enumerate}

This paper is organized as follows. In Section 2,  we present a brief summary of the WL frailty models and
propose the shared WL frailty models.
In addition, we provide new general properties of the WL frailty model.
Estimation of the parameters by maximum likelihood (ML) estimation via EM-algorithm and a semiparametric
approach is investigated in Section 3.
In Section 4, some numerical results of the estimators are presented with a discussion of the results.
The proposed model is illustrated with the time after surgery in patients diagnosed with colorectal cancer in Section 5.
It is shown that the proposed model has a better performance than those based on gamma and inverse Gaussian frailty models.
Discussions and some concluding remarks are shown in Section 6.
The computational functions to fit the WL frailty model were implemented in
the R programming language \citep{R2022} and were compiled into an initial
version of an R package called \texttt{extrafrail}, available at
https://CRAN.R-project.org/package=extrafrail.

\section{The state of the art for the WL frailty models}
\noindent

In this Section, we present briefly the WL distribution and its use in a frailty models context.


\subsection{The WL distribution}\label{sec2.1}
\noindent

The WL distribution was studied in \cite{GHITANY20111190}.
A random variable $Z$ follows a WL distribution with parameters $\alpha$ and $\phi$, denoted by $Z \sim \textrm{WL}(\alpha, \phi)$, if its probability density function (pdf) is given by
\begin{equation}\label{inv:01}
f(z; \alpha, \phi) = \frac{\alpha^{\phi+1}}{(\alpha + \phi)\Gamma(\phi)}z^{\phi-1}(1+z)\textrm{e}^{-\alpha\,z}, \quad {z} > 0,
\end{equation}
where $\alpha > 0$ is a scale parameter and $\phi > 0$ is the shape parameter. For $\phi = 1$ the WL distribution reduces to Lindley distribution. For $\phi < 1$, the WL distribution is suitable for modeling right skewed positive data with bathtub shaped hazard rate function, which cannot be modeled using, for instance, the gamma or Weibull models. Furthermore, the WL distribution is also suitable for modeling bimodal data, which cannot be modeled using the gamma or Weibull distributions.
%

In particular, the mean and variance associated with \eqref{inv:01}, are respectively given by
\begin{equation*}\label{inv:05}
\mathbb{E}(Z) = \frac{\phi(\alpha + \phi + 1)}{\alpha(\alpha + \phi)} \qquad \mbox{and} \qquad \textrm{Var}(Z) = \frac{(\phi +1 )(\alpha + \phi)^2 - \alpha^2}{\alpha^2(\alpha + \phi)^2}.
\end{equation*}
Additionally, a useful result for our development is
\begin{equation*}
\mathbb{E}(\log Z) = -\frac{\alpha}{\phi(\alpha+\phi)}+\psi(\phi+1)-\log(\alpha),
\end{equation*}
where $\psi(\cdot)$ denotes the digamma function. 
The WL distribution can be viewed as a mixture of two gamma distributions with known weights \citep{GHITANY20111190} as follows
\begin{eqnarray*}\label{mixture}
f(z; \alpha, \phi) &=& \omega \, \frac{\alpha^{\phi}}{\Gamma(\phi)}z^{\phi-1}\textrm{e}^{-\alpha\,z} + (1-\omega)\,
\frac{\alpha^{\phi+1}}{\Gamma(\phi+1)}z^{\phi}\textrm{e}^{-\alpha\,z}
= \omega \,f_{Z_1}(z)+ (1-\omega)\,f_{Z_2}(z),
\end{eqnarray*}
where $\omega = \alpha/(\alpha + \phi)$ and $f_{Z_1}(z)$ and $f_{Z_2}(z)$
are the pdf of the $Z_1 \sim \textrm{Ga}(\phi, \alpha)$ and $Z_2 \sim \textrm{Ga}(\phi + 1, \alpha)$, respectively. This mixture of gamma distributions has a certain advantage over competitors since it does not require a subjective approach involving guessing the mixing weights (know weights), which is a useful property of the proposed model.
The application of the WL distribution in a frailty models in a univariate context was considered in \cite{mota21} having $\alpha = \sqrt{\phi(\phi+1)}$. With this restriction, we have that $\mathbf{E}(Z) = 1$ and the variance of $Z$ is given by $\theta := 2(\phi+\sqrt{\phi(\phi+1)})^{-1}$ (i.e., $\phi=4/(\theta(\theta+4))$). For this reason, henceforth we consider the parametrization in terms of $\theta$. From here on, we use the notation $Z \sim \textrm{WL}(\theta)$ to indicate that $Z$ is a random variable following a reparameterized WL distribution.
Consequently, the pdf and Laplace transform for this particular WL model are, respectively,
\begin{align*}\label{density.WL}
f(z; \theta) &= \frac{\theta}{2 \Gamma(b_{\theta})} a_{\theta}^{-b_{\theta}-1}z^{b_{\theta}-1}(1+z)\exp\left\{-\frac{z}{a_{\theta}}\right\}, \quad {z}>0,\\
\mathcal{L}_Z(s) &=  \mathbb{E}(e^{-s Z})=\left(1 + a_{\theta} s\right)^{-b_{\theta}-1}\left(1 + \frac{ \theta s}{2}\right), \quad s\in \mathbb{R},
\end{align*}
where $f(z; \theta):=f(z; \sqrt{4/(\theta(\theta+4)[4/(\theta(\theta+4))+1]}, 4/(\theta(\theta+4)))$, $a_{\theta}=\frac{ \theta (\theta+4)}{2(\theta+2)}$ and $b_{\theta}=\frac{4}{\theta(\theta+4)}$.
Thus, \begin{eqnarray*}\label{mixture}
f(z; \theta) &=& \omega \, \frac{a_\theta^{-b_\theta}}{\Gamma(b_\theta)}z^{b_\theta-1}\textrm{e}^{-z/a_\theta} + (1-\omega)\,
\frac{a_\theta^{-b_\theta-1}}{\Gamma(b_\theta+1)}z^{b_\theta}\textrm{e}^{-z/a_\theta}
= \omega \,f_{Z_1}(z)+ (1-\omega)\,f_{Z_2}(z),
\end{eqnarray*}
where $\omega = a_\theta^{-1}/(a_\theta^{-1} + b_\theta)$ and $f_{Z_1}(z)$ and $f_{Z_2}(z)$
are the pdf of the $Z_1 \sim \textrm{Ga}(b_\theta, a_\theta^{-1})$ and $Z_2 \sim \textrm{Ga}(b_\theta + 1, a_\theta^{-1})$, respectively.

To finalize this subsection, in the following Proposition, we present the derivatives of the Laplace transform for the WL$(\theta)$ model. This result is very useful to our future development.
\begin{Proposition}\label{d.Lz}
For the WL$(\theta)$ model and for $d \in \mathbb{N}$ the $d$-th derivative in relation to $s$ of $\mathcal{L}_Z(s)$, say $\mathcal{L}^{(d)}_Z(s)$, is given by
\begin{align*}
\mathcal{L}^{(d)}_Z(s)&=(-1)^d \pi_d(b_{\theta}) a_{\theta}^{d-1} \left(1 + a_{\theta} s\right)^{-b_{\theta}-d-1}\left(1+\frac{\theta(s+d-1)}{\theta+2}\right),
\end{align*}
where $\pi_1(b_{\theta})=1$ and $\pi_d(b_{\theta})=\prod_{i=1}^{d-1}\left\{b_{\theta}+i\right\}$, for $d>1$.
\end{Proposition}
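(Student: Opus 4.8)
The plan is to prove the formula by induction on $d$, differentiating the product form $\mathcal{L}_Z(s)=(1+a_\theta s)^{-b_\theta-1}\bigl(1+\tfrac{\theta s}{2}\bigr)$ directly and simplifying with the two elementary identities $a_\theta b_\theta=\tfrac{2}{\theta+2}$ and $(b_\theta+1)a_\theta=\tfrac{\theta+2}{2}$, both immediate from $a_\theta=\tfrac{\theta(\theta+4)}{2(\theta+2)}$ and $b_\theta=\tfrac{4}{\theta(\theta+4)}$. For the base case $d=1$, the product rule gives
\[
\mathcal{L}^{(1)}_Z(s)=(1+a_\theta s)^{-b_\theta-2}\Bigl[-(b_\theta+1)a_\theta\bigl(1+\tfrac{\theta s}{2}\bigr)+\tfrac{\theta}{2}(1+a_\theta s)\Bigr],
\]
and substituting the identities above collapses the bracket to $-\bigl(1+\tfrac{\theta s}{\theta+2}\bigr)$, which is exactly the claimed expression since $\pi_1(b_\theta)=1$ and $a_\theta^{0}=1$.

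For the inductive step, assume the formula for $d$. Differentiating once more with the product rule and factoring out $(1+a_\theta s)^{-b_\theta-d-2}$, and using $\pi_{d+1}(b_\theta)=(b_\theta+d)\pi_d(b_\theta)$ together with $a_\theta^{d}=a_\theta\cdot a_\theta^{d-1}$, the whole step reduces to the single algebraic identity
\[
-(b_\theta+d+1)a_\theta\Bigl(1+\tfrac{\theta(s+d-1)}{\theta+2}\Bigr)+\tfrac{\theta}{\theta+2}(1+a_\theta s)=-(b_\theta+d)a_\theta\Bigl(1+\tfrac{\theta(s+d)}{\theta+2}\Bigr).
\]
Expanding both sides, the terms linear in $s$ and in $d$ cancel, and what remains is $-a_\theta+\tfrac{\theta}{\theta+2}\bigl(a_\theta b_\theta+a_\theta+1\bigr)$; using $a_\theta b_\theta+a_\theta+1=\tfrac{\theta+4}{2}$ (again from the definitions of $a_\theta,b_\theta$) this equals $-a_\theta+\tfrac{\theta(\theta+4)}{2(\theta+2)}=0$, completing the induction.

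Alternatively, one can bypass induction via the gamma mixture $f(z;\theta)=\omega f_{Z_1}(z)+(1-\omega)f_{Z_2}(z)$ with $Z_1\sim\textrm{Ga}(b_\theta,a_\theta^{-1})$, $Z_2\sim\textrm{Ga}(b_\theta+1,a_\theta^{-1})$ and $\omega=a_\theta^{-1}/(a_\theta^{-1}+b_\theta)$: since $\mathcal{L}_{Z_j}(s)=(1+a_\theta s)^{-k_j}$ has $d$-th derivative $(-1)^d a_\theta^{d}\,[\Gamma(k_j+d)/\Gamma(k_j)]\,(1+a_\theta s)^{-k_j-d}$, summing and pulling out $(-1)^d a_\theta^{d}(1+a_\theta s)^{-b_\theta-d-1}$ leaves the factor $\pi_d(b_\theta)\bigl[\omega b_\theta(1+a_\theta s)+(1-\omega)(b_\theta+d)\bigr]$ after writing $\Gamma(b_\theta+d)/\Gamma(b_\theta)=b_\theta\pi_d(b_\theta)$ and $\Gamma(b_\theta+1+d)/\Gamma(b_\theta+1)=(b_\theta+d)\pi_d(b_\theta)$; inserting $\omega=\tfrac{\theta+2}{\theta+4}$ (which follows from $a_\theta b_\theta=\tfrac{2}{\theta+2}$) and the values of $a_\theta$ and $b_\theta$ reduces this factor to $\pi_d(b_\theta)a_\theta^{-1}\bigl(1+\tfrac{\theta(s+d-1)}{\theta+2}\bigr)$, so one power of $a_\theta$ cancels and the stated formula drops out.

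In either route the only real work is the terminal algebraic simplification: verifying that an expression that at first carries denominators $\theta(\theta+4)^2$ and $\theta+4$ collapses to the compact form $1+\theta(s+d-1)/(\theta+2)$ that is affine in $s$. I anticipate no conceptual difficulty here, just careful bookkeeping with the substitutions $a_\theta=\theta(\theta+4)/(2(\theta+2))$, $b_\theta=4/(\theta(\theta+4))$ and the repeated use of $a_\theta b_\theta=2/(\theta+2)$.
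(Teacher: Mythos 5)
Your proof is correct, and your primary route is exactly the paper's: the authors' entire proof reads ``The proof is simple using induction on $d$,'' so your induction argument simply supplies the details they omit. I checked the key steps: the identities $a_\theta b_\theta=\tfrac{2}{\theta+2}$, $(b_\theta+1)a_\theta=\tfrac{\theta+2}{2}$ and $a_\theta b_\theta+a_\theta+1=\tfrac{\theta+4}{2}$ all hold, the base case collapses to $-(1+\tfrac{\theta s}{\theta+2})$ as claimed, and the bracket identity in the inductive step reduces (after the linear terms cancel) to $-a_\theta+\tfrac{\theta}{\theta+2}\cdot\tfrac{\theta+4}{2}=0$. Your second route via the gamma mixture is a genuinely different argument that the paper does not mention: it replaces the induction by the closed form for derivatives of gamma Laplace transforms, so the only work is the single algebraic identity $\omega b_\theta(1+a_\theta s)+(1-\omega)(b_\theta+d)=a_\theta^{-1}\bigl(1+\tfrac{\theta(s+d-1)}{\theta+2}\bigr)$ with $\omega=\tfrac{\theta+2}{\theta+4}$, which I also verified; this version is arguably cleaner and makes the combinatorial factor $\pi_d(b_\theta)$ appear for structural rather than inductive reasons.
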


\begin{proof}
The proof is simple using induction on $d$.
\end{proof}

\subsection{WL frailty models in the literature}
\noindent

In this subsection we present a brief summary on the WL frailty models in the literature in order to clarify our contribution.

\subsubsection{Univariate WL frailty model}
\noindent

Let $Z_i > 0$, $i=1,\ldots,n$, be the latent random variable representing the frailty term associated to the $i$-th individual. In a multiplicative hazards framework, given $Z_i = z_i$ and a vector of $p$ covariates (without intercept term), say $\mathbf{x}_i=(x_{i1},\ldots,x_{ip})$, the conditional hazard function for the $i$-th individual is given by
\begin{equation}
\lambda(t\mid z_i, \mathbf{x}_i)=\lambda_0(t) z_i \exp(\mathbf{x}_i^\top {\bm \beta}), \quad i=1,\ldots,n, \label{cond.lambda}
\end{equation}
where ${\bm \beta}=(\beta_1,\ldots,\beta_p)$ are the regression parameters, respectively, and the distribution of $Z$ corresponds to a nonnegative random variable. 
The conditional survival function related to (\ref{cond.lambda}) is given by

\begin{equation*}
S(t\mid z_i, \mathbf{x}_i)=\exp\left(-z_i  \exp\left(\mathbf{x}_i^\top {\bm \beta}\right) \Lambda_0(t)\right), \quad i=1, \ldots, n, \label{cond.S}
\end{equation*}
and the marginal survival function (obtained integrating eq. (\ref{cond.S}) in relation to the density function assumed for $Z$) is given by
\begin{equation}
S(t\mid \mathbf{x}_i)=\mathcal{L}_Z\left(\exp\left(\mathbf{x}_i^\top {\bm \beta}\right) \Lambda_0(t)\right), \quad i=1,\ldots,n, \label{cond.S}
\end{equation}
and the corresponding marginal pdf is
\begin{equation*}
f(t\mid \mathbf{x}_i) =  -\exp\left(\mathbf{x}_i^\top {\bm \beta}\right) \lambda_0(t) \mathcal{L}_Z^{(1)}\left(\exp\left(\mathbf{x}_i^\top {\bm \beta}\right) \Lambda_0(t)\right), \quad i=1,\ldots,n.
\end{equation*}
Note that all of the mentioned distributions have a closed form to the Laplace transform and hence their use in this specific context. Particularly, for $Z_i\sim \mbox{WL}(\theta)$ model, such marginal functions assume the forms
\begin{align}
f(t\mid \mathbf{x}_i) &=  \lambda_0(t) \exp\left(\mathbf{x}_i^\top {\bm \beta}\right)  \left(1 + a_{\theta} \exp\left(\mathbf{x}_i^\top {\bm \beta}\right) \Lambda_0(t)\right)^{-b_{\theta}-2}\left(1 + \frac{ \theta \exp\left(\mathbf{x}_i^\top {\bm \beta}\right) \Lambda_0(t)}{(\theta+2)}\right) \nonumber, \quad \mbox{and}\\
S(t\mid \mathbf{x}_i) &=  \left(1 + a_{\theta} \exp\left(\mathbf{x}_i^\top {\bm \beta}\right) \Lambda_0(t)\right)^{-b_{\theta}-1}\left(1 + \frac{ \theta \exp\left(\mathbf{x}_i^\top {\bm \beta}\right) \Lambda_0(t)}{2}\right), \quad \mbox{for } t,\theta>0. \label{marg.surv.WL}
\end{align}

\noindent For this particular model, we also present the following new additional results.


\begin{Proposition}\label{prop1}
The density of the frailty distribution among the survivors (indicated by the condition $T > t$)
can be written in the form
\begin{eqnarray*}
f(z|T> t) &=&
\frac{A_\theta^{-b_\theta-1}}{(A_\theta^{-1} + b_\theta)\Gamma(b_\theta)}z^{b_\theta - 1}(1+z)\exp(-z/A_\theta), \quad {z}>0,
\end{eqnarray*}
which is the density of a WL$(b_\theta, A_\theta^{-1})$, where $A_\theta = a_\theta/(1+a_\theta\,\Lambda_0(t))$.
\end{Proposition}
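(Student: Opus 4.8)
The plan is to apply Bayes' theorem: conditioning on the event $\{T>t\}$ reweights the marginal (``prior'') frailty density $f(z;\theta)$ by the conditional survival probability $S(t\mid z)$, and the resulting expression should be recognizable as a WL kernel with an updated scale parameter. Concretely, I would start from
\[
f(z\mid T>t)=\frac{S(t\mid z)\,f(z;\theta)}{S(t)},
\]
where $S(t\mid z)=\exp\!\big(-z\,\Lambda_0(t)\big)$ is the conditional survival function (with any covariate factor $\exp(\mathbf x_i^\top{\bm\beta})$ absorbed into $\Lambda_0(t)$), and $S(t)=\mathcal L_Z(\Lambda_0(t))$ is the marginal survival function, which does not depend on $z$ and therefore acts only as a normalizing constant.

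Next I would collect the factors depending on $z$. Using the reparameterized $\mathrm{WL}(\theta)$ density — equivalently the density \eqref{inv:01} with $\alpha=a_\theta^{-1}$ and $\phi=b_\theta$ — one obtains
\[
f(z\mid T>t)\ \propto\ z^{\,b_\theta-1}(1+z)\exp\!\Big(-z\big[a_\theta^{-1}+\Lambda_0(t)\big]\Big).
\]
The key (and essentially only nontrivial) algebraic step is the identity $a_\theta^{-1}+\Lambda_0(t)=\big(1+a_\theta\Lambda_0(t)\big)/a_\theta=A_\theta^{-1}$, which is precisely the definition $A_\theta=a_\theta/(1+a_\theta\Lambda_0(t))$ given in the statement. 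Hence $f(z\mid T>t)$ is proportional to $z^{\,b_\theta-1}(1+z)\textrm{e}^{-z/A_\theta}$, i.e.\ to the kernel of a WL density with shape $b_\theta$ and scale parameter $A_\theta^{-1}$.

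Finally I would pin down the normalizing constant. Rather than recomputing $S(t)$, the cleanest route is to use that every WL density integrates to one, i.e.
\[
\int_0^\infty z^{\phi-1}(1+z)\textrm{e}^{-\alpha z}\,dz=\Gamma(\phi)\alpha^{-\phi}+\Gamma(\phi+1)\alpha^{-\phi-1}=\frac{(\alpha+\phi)\Gamma(\phi)}{\alpha^{\phi+1}}.
\]
Taking $\phi=b_\theta$ and $\alpha=A_\theta^{-1}$ yields the normalizing constant $A_\theta^{-b_\theta-1}/[(A_\theta^{-1}+b_\theta)\Gamma(b_\theta)]$, which is exactly the one in the statement; this identifies $f(z\mid T>t)$ as the $\mathrm{WL}(b_\theta,A_\theta^{-1})$ density and finishes the proof.

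I do not expect a genuine obstacle here: the argument is a one-line Bayes update followed by bookkeeping of the scale parameter in the exponent. The only points needing care are the algebraic identity for $A_\theta^{-1}$ and being explicit about whether covariates are being suppressed in the statement; if they are retained, replacing $\Lambda_0(t)$ by $\exp(\mathbf x_i^\top{\bm\beta})\Lambda_0(t)$ throughout leaves the derivation unchanged. As a consistency check one can also verify directly that $S(t)=\mathcal L_Z(\Lambda_0(t))=(1+a_\theta\Lambda_0(t))^{-b_\theta-1}\big(1+\tfrac12\theta\Lambda_0(t)\big)$ reproduces the same constant.
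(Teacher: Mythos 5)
Your proof is correct and follows essentially the same route as the paper's: a Bayes update $f(z\mid T>t)=f(z)S(t\mid z)/S(t)$, the identification $a_\theta^{-1}+\Lambda_0(t)=A_\theta^{-1}$, and recognition of the WL$(b_\theta,A_\theta^{-1})$ kernel. The only cosmetic difference is that you read the normalizing constant off from the fact that the WL density integrates to one, whereas the paper carries $S(t)=\mathcal{L}_Z(\Lambda_0(t))$ through explicitly and uses the identity $A_\theta^{-1}+b_\theta=(2+\theta\Lambda_0(t))/\theta$ to arrive at the same constant.
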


\begin{Proposition}\label{prop2}
The density of the frailty given a failure at time $t$, that is the conditional
distribution of $Z|T = t$, is given by
\begin{eqnarray*}
f(z|T = t) &=& \frac{A_\theta^{-b_\theta-2}}{(A_\theta^{-1} + b_\theta + 1)\Gamma(b_\theta + 1)}z^{b_\theta}(1+z)\exp(-z/A_\theta), \quad {z}>0,
\end{eqnarray*}
which is the density of a WL($b_\theta + 1,A_\theta^{-1})$.
\end{Proposition}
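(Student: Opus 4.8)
The plan is to obtain $f(z\mid T=t)$ from Bayes' theorem and then to recognize the resulting kernel in $z$ as that of a weighted Lindley density with shape parameter increased by one unit; in fact Proposition \ref{prop2} will drop out almost immediately from Proposition \ref{prop1}.

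First I would note that, under \eqref{cond.lambda} (taking $\exp(\mathbf{x}_i^\top\bm{\beta})=1$, as in Proposition \ref{prop1}), the conditional density of $T$ given $Z=z$ is $f(t\mid z)=z\,\lambda_0(t)\exp(-z\Lambda_0(t))$. Hence, by Bayes' theorem,
\[
f(z\mid T=t)=\frac{f(t\mid z)\,f(z;\theta)}{f(t)}\ \propto\ z\,\exp(-z\Lambda_0(t))\,f(z;\theta),\qquad z>0,
\]
the omitted factor $\lambda_0(t)/f(t)$ being free of $z$. Substituting $f(z;\theta)\propto z^{b_\theta-1}(1+z)\exp(-z/a_\theta)$ and merging the exponentials through the identity $\Lambda_0(t)+a_\theta^{-1}=A_\theta^{-1}$ — which is immediate from the definition $A_\theta=a_\theta/(1+a_\theta\Lambda_0(t))$ in Proposition \ref{prop1} — gives
\[
f(z\mid T=t)\ \propto\ z\cdot z^{b_\theta-1}(1+z)\exp(-z/A_\theta)=z^{b_\theta}(1+z)\exp(-z/A_\theta),\qquad z>0.
\]

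The right-hand side is exactly the kernel of the density $f(\cdot;\alpha,\phi)$ in \eqref{inv:01} with shape $\phi=b_\theta+1$ and $\alpha=A_\theta^{-1}$; since that family is already normalized, the constant is forced to equal $\alpha^{\phi+1}/[(\alpha+\phi)\Gamma(\phi)]=A_\theta^{-b_\theta-2}/[(A_\theta^{-1}+b_\theta+1)\Gamma(b_\theta+1)]$, which is the claimed expression, whence $Z\mid T=t\sim\textrm{WL}(b_\theta+1,A_\theta^{-1})$. There is no genuinely hard step: the only point requiring care is tracking the extra factor $z$ produced by the conditional hazard, which is precisely what raises the shape parameter from $b_\theta$ to $b_\theta+1$ relative to Proposition \ref{prop1}; the remaining manipulations are the same elementary algebra used there. (Equivalently, one could compute $f(t)=-\lambda_0(t)\mathcal{L}_Z^{(1)}(\Lambda_0(t))$ in closed form via Proposition \ref{d.Lz} with $d=1$ and divide directly, but invoking the normalization of the WL family avoids that computation.)
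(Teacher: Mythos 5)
Your proof is correct and follows essentially the same route as the paper: both apply Bayes' theorem to $f(z\mid T=t)=f(z)f(t\mid z)/f(t)$ and identify the result as $\mathrm{WL}(b_\theta+1,A_\theta^{-1})$, the extra factor of $z$ from $f(t\mid z)$ raising the shape parameter by one. The only difference is that the paper carries the normalizing constant through explicitly (dividing by the closed-form marginal $f(t)$ and simplifying via $a_\theta b_\theta = 2/(2+\theta)$), whereas you work with the kernel up to proportionality and invoke the normalization of the WL family -- a legitimate and slightly cleaner shortcut.
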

The proofs of propositions 2.2 and 2.3 are given in Appendix.



\subsubsection{Shared WL frailty models}
\noindent

The idea of multiplicative hazards framework can be extended to the shared case, considering that the $i$th cluster has $n_i$ observations, for $i=1,\ldots,m$. In this case the conditional hazard and conditional survival functions for the $j$th individual in the $i$th cluster is given by
\begin{align*}
    \lambda(t_{ij}\mid z_i,\mathbf{x}_{ij})&=z_i \exp(\mathbf{x}_{ij}^\top \beta)\lambda_0(t_{ij}), \quad j=1,\ldots,n_i, \nonumber\\
    S(t_{ij}\mid z_i,\mathbf{x}_{ij})&=\exp\left(-z_i \exp(\mathbf{x}_{ij}^\top \beta)\Lambda_0(t_{ij})\right), \quad i=1,\ldots,m; \, j=1,\ldots,n_i,
\end{align*}
and with a similar development, it is obtained that the marginal survival and density functions are
\begin{align}
    S(t_{i1},\ldots, t_{in_i} \mid \mathbf{x}_{ij})&=\mathcal{L}_Z\left(\exp\left(\mathbf{x}_i^\top {\bm \beta}\right) \sum_{j=1}^{n_i}\Lambda_0(t_{ij})\right), \quad i=1,\ldots,m,  \label{mult.surv}\\
f(t_{i1},\ldots, t_{in_i} \mid \mathbf{x}_i) &=  (-1)^{n_i}\exp\left(\mathbf{x}_i^\top {\bm \beta}\right) \sum_{j=1}^{n_i}\lambda_0(t_{ij}) \mathcal{L}_Z^{(n_i)}\left(\exp\left(\mathbf{x}_i^\top {\bm \beta}\right) \sum_{j=1}^{n_i}\Lambda_0(t_{ij})\right). \label{mult.dens}
\end{align}
The particular case where $n_i=2$, $i=1,\ldots,m$ is known in the literature as the bivariate frailty model.
Distributions considered for the frailty terms $Z$ are the gamma \citep[with a general $n_i$; ][]{Clayton1978, Clayton1985} and the PS \citep[for the bivariate case, ][]{Manatunga1999}.
Other recent proposals are $Z_i\sim$ EE \citep{Barreto2019}, but, in this case, the authors present
$\mathcal{L}_Z^{(d)}$ for $d=2$ and $3$ and they claim ``analytical expressions for higher-order derivatives of $\mathcal{L}_Z^{(d)}$ can be obtained through programs such as Mathematica and Maple''. However, such derivatives also need to be programmed into some software and for $n_i$ moderately large this is impracticable. For instance, in our real data application it is observed up to $n_i=23$.
In a similar way, for $Z_i \sim$ GIG \cite{Piancastelli2021} presented $\mathcal{L}_Z^{(d)}$ in a recursive form, which is computationally inefficient when it is neither, again, moderately large.
For this reason, the WL appears as an alternative in this way. Taking advantage of the closed-form of the derivatives of the Laplace transform for the WL (see Proposition \ref{d.Lz}), for the first time we considered the WL in this shared frailty model context. With this, Eq. (\ref{mult.surv}) and (\ref{mult.dens}) are
\begin{eqnarray*}
S(t_{i1},\ldots, t_{in_i} \mid \mathbf{x}_{ij})&=&\left(1 + a_{\theta} \exp\left(\mathbf{x}_i^\top {\bm \beta}\right) \sum_{j=1}^{n_i}\Lambda_0(t_{ij})\right)^{-b_{\theta}-1}\left(1 + \frac{ \theta \exp\left(\mathbf{x}_i^\top {\bm \beta}\right) \sum_{j=1}^{n_i}\Lambda_0(t_{ij})}{2}\right),   \label{mult.surv.WL}\\
f(t_{i1},\ldots, t_{in_i} \mid \mathbf{x}_i) &=&  \exp\left(\mathbf{x}_i^\top {\bm \beta}\right) \sum_{j=1}^{n_i}\lambda_0(t_{ij}) \pi_{n_i}(b_{\theta}) a_{\theta}^{n_i-1} \left(1 + a_{\theta} \exp\left(\mathbf{x}_i^\top {\bm \beta}\right) \sum_{j=1}^{n_i}\Lambda_0(t_{ij})\right)^{-b_{\theta}-n_i-1}\nonumber\\
&~~~~~~~\times\left(1+\frac{\theta(\exp\left(\mathbf{x}_i^\top {\bm \beta}\right) \sum_{j=1}^{n_i}\Lambda_0(t_{ij})+n_i-1)}{\theta+2}\right), \quad i=1,\ldots,m, \label{mult.dens.WL}
\end{eqnarray*}
We remark that there are few models that allow obtaining a closed-form for these two functions for a general $n_i$: density and survival (marginal or unconditional in both cases).

\subsection{About $\Lambda_0(\cdot)$ in a WL frailty model context}
\noindent

For the univariate frailty WL model in (\ref{marg.surv.WL}), \cite{mota21} considered parametric models for $\Lambda_0(\cdot)$ taking the Weibull and Gompertz models. In a similar way, for the bivariate frailty WL model \cite{Tyagi21} considered the generalized Weibull and generalized log-logistic models. In this work, we considered a parametric model using the Weibull model with parametrization
\[
\Lambda_0(t)=\lambda \, t^\nu, \quad t, \lambda, \nu>0.
\]
However, in order to provide a more flexible scheme and for the first time in the literature, we also considered a non-parametric framework for $\Lambda_0(\cdot)$ in a frailty WL model.

\subsection{Kendall's $\tau$}
\noindent

\cite{Tyagi21} presented the Kendall's $\tau$ for the WL frailty model. However, such coefficient has a non-closer form and was presented for a different parameterization (not in terms of the variance for the frailty term as in here). In order to compare the WL and gamma frailty models, we present the Kendall's $\tau$ for the WL frailty model as
\begin{equation*}
    \tau=4 a_\theta (1+b_\theta) \int_0^{\infty} s\left(1+a_\theta s\right)^{-2b_\theta-4}\left(1+\frac{\theta s}{2}\right)\left(1+\frac{\theta(s+1)}{(\theta+2)}\right)ds-1,
\end{equation*}
whereas for the gamma frailty model it is well known that $\tau=\theta/(\theta+2)$ and for the IG frailty model $\tau=0.5-1/\theta+(2/\theta^2)\exp(2/\theta)E_1(2/\theta)$, where $E_1(\cdot)$ denotes the exponential integral function \citep[][page 228, eq. 5.1.1.]{Abramowitz:72}. The comparison between the three models is direct because $\theta$ represents the frailty variance in all the models. Figure \ref{kendall.WLgamma} compares the Kendall's $\tau$ for such models in terms of $\theta$. Note that, for a fixed variance for the frailty, the WL frailty model provides a greater Kendall's $\tau$ than the gamma and IG frailty models.

\begin{figure}[!htbp]
\centering
\includegraphics[width=5.5cm]{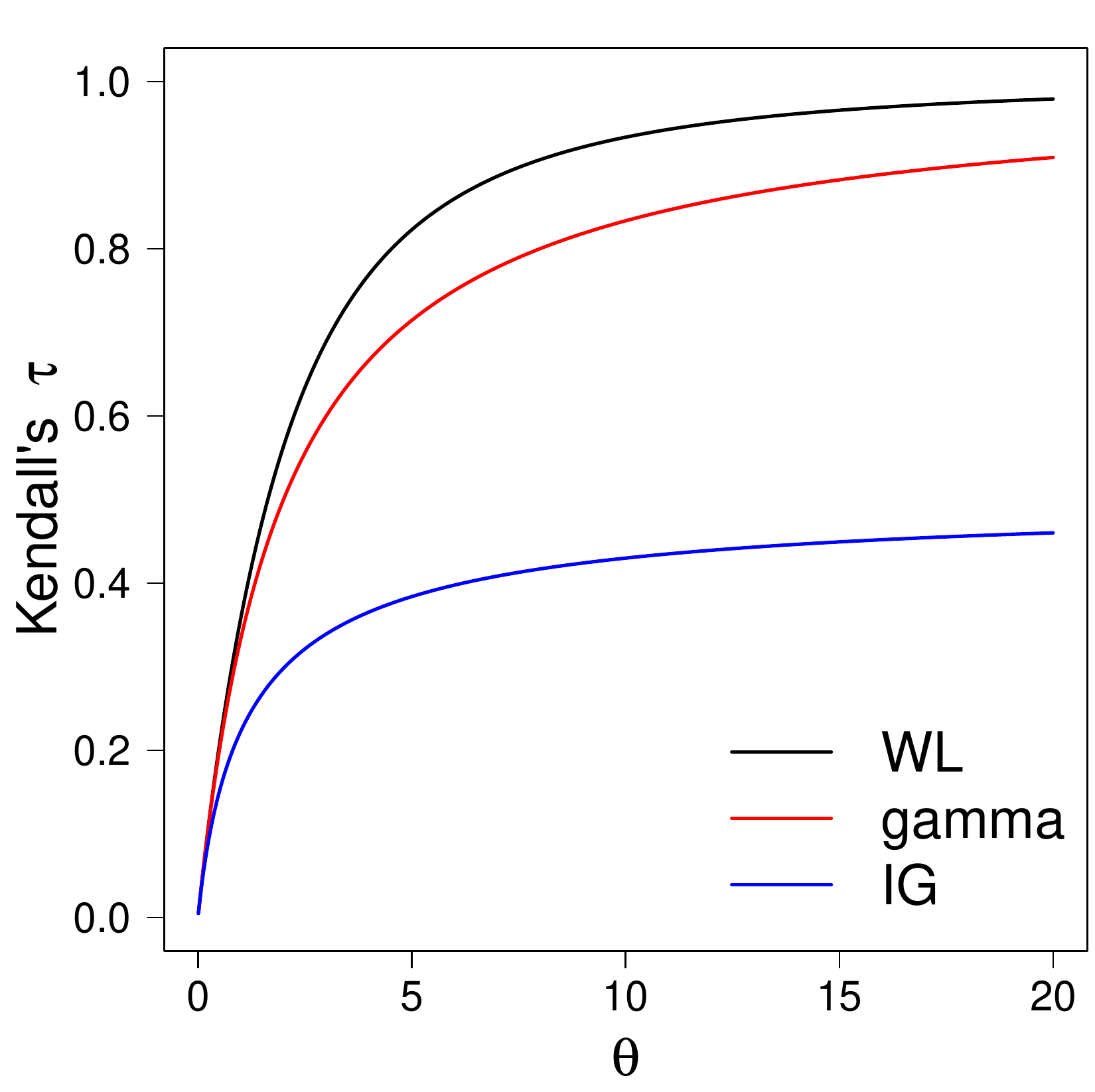}
\caption{Comparison among Kendall's $\tau$ for WL frailty, gamma frailty and IG frailty models.}
\label{kendall.WLgamma}
\end{figure}

\section{ML estimation for the WL frailty model}
\noindent

In this Section, we discuss the parameter estimation for the WL frailty model. First, we discuss an approach based on the assumption of a parametric model for the baseline distribution. Then, we present an EM algorithm to perform the parameter estimation using a non-parametric approach for the baseline distribution.

\subsection{Using a parametric approach for the baseline distribution}
\noindent

Let $Y_{ij}$ and $C_{ij}$ be the failure and censoring times for the $j$-th individual in the $i$-th group and $\textbf{x}_{ij}$ be a $p\times 1$ covariate vector (without intercept term), where $1\leq i \leq m$ and $1 \leq j \leq n_i$. Under a right censoring scheme, we observe the random variables $T_{ij}=\min(Y_{ij}, C_{ij})$ and $\delta_{ij}=I(Y_{ij} \leq C_{ij})$, where $I(A)=1$ if the event $A$ occurs (0 otherwise).
We assume the frailty terms $Z_1,\ldots,Z_m$ be a random sample from the WL$(\theta$) distribution. Considering the following assumptions:
\begin{itemize}
\item[i)] The pairs $(Y_{i1}, C_{i1}),\ldots,(Y_{in_i}, C_{in_i})$ are conditionally independent given $Z_i$, and $Y_{ij}$ and $C_{ij}$ are mutually independent for $j=1,\ldots,n_i$.
\item[ii)] $C_{i1}, \ldots, C_{in_i}$ are non-informative about $Z_i$.
\end{itemize}
Under this setting, the observed log-likelihood function is given by
\begin{align}
L({\bm \beta}, \Lambda_0, \theta)&=\prod_{i=1}^m\int_0^{+\infty} \prod_{j=1}^{n_i} \left[z_i \lambda_0(t_{ij})\exp\left(\textbf{x}_{ij}^\top {\bm \beta}\right)\right]^{\delta_{ij}} \exp\left(-z_i \Lambda_0(t_{ij})\textrm{e}^{\textbf{x}_{ij}^\top {\bm \beta}}\right)f(z_i;\theta) dz_i, \nonumber \\
&=\left(\frac{\theta a_{\theta}^{-\left(b_{\theta}+1\right)}}{2 \Gamma(b_{\theta})}\right)^m \exp\left(\sum_{i=1}^m \sum_{j=1}^{n_j} \delta_{ij} {\bm x}_{ij}^\top {\bm \beta}\right)\prod_{i=1}^m \int_0^{+\infty} z_i^{b_{\theta}^{(i)}-1}(1+z_i)\textrm{e}^{-z_i/a_{\theta}^{(i)}}dz_i \prod_{j=1}^{n_j} \left[\lambda_0(t_{ij})\right]^{\delta_{ij}}, \nonumber \\
&=\left(\frac{\theta a_{\theta}^{-\left(b_{\theta}+1\right)}}{2 \Gamma(b_{\theta})}\right)^m \exp\left(\sum_{i=1}^m \sum_{j=1}^{n_j} \delta_{ij} {\bm x}_{ij}^\top {\bm \beta}\right)\prod_{i=1}^m  \Gamma\left(b_{\bm \psi}^{(i)}\right)\left(a_{\bm \psi}^{(i)}\right)^{b_{\bm \psi}^{(i)}}\left(1+a_{\bm \psi}^{(i)} b_{\bm \psi}^{(i)}\right) \prod_{j=1}^{n_j} \left[\lambda_0(t_{ij})\right]^{\delta_{ij}}, \label{observed.loglike}
\end{align}
where $a_{\bm \psi}^{(i)}=\left(\sum_{j=1}^{n_i}\Lambda_0(t_{ij})\textrm{e}^{{\bm x}_{ij}^\top {\bm \beta}}+1/a_{\theta}\right)^{-1}$ and $b_{\bm \psi}^{(i)}=r_i+b_{\theta}$.
In a parametric approach, $\Lambda_0(t)$ or $\lambda_0(t)$ are specified by a set of parameters, say ${\bm \nu}$, and then the parameter vector is reduced to $({\bm \beta}, {\bm \nu}, \theta)$.
For instance, for the Weibull (WEI) distribution we use the parameterization $\Lambda_0(t)=\lambda\, t^\rho$ and $\lambda_0(t)=\lambda \, \rho\, t^{\rho-1}$, where $t>0$ and ${\bm \nu}=(\lambda,\rho)\in \mathbb{R}_+^2$.
From a classical approach the ML estimator can be obtained maximizing $\log L({\bm \beta}, {\bm \nu}, \theta)$ in relation to ${\bm \beta}, {\bm \nu}$ and $\theta$. 

\subsection{ML estimation via EM-algorithm and a semiparametric approach}\label{sec.EM}
\noindent

In this subsection, we discuss an EM-type algorithm to perform the parameter estimation for the WL frailty model. Despite this approach can also be used for a parametric approach, the motivation to this development arises from the application of a non-parametric approach to the failure times.

For our particular problem, the complete data are given by ${\bm D}_{c}=({\bm t}^\top, {\bm \delta}^\top, \textbf{X}^\top, {\bm Z}^\top)$, where ${\bm t}^\top=(t_{11},\ldots, t_{mn_m})$, ${\bm \delta}^\top=(\delta_{11},\ldots, \delta_{mn_m})$, $\textbf{X}^\top=(x_{11},\ldots,x_{mn_m})$ and ${\bm Z}^\top=(Z_1,\ldots,Z_m)$. In our context, ${\bm D}_{obs}=\left({\bm t}^\top, {\bm \delta}^\top, \textbf{X}^\top\right)$ represents the observed data and ${\bm Z}$ (the frailty terms) denotes the latent variables. The complete likelihood function can be written conveniently as $L({\bm \beta},{\bm \Lambda}_0,\theta;{\bm D}_{c})=L_1({\bm \beta}, \Lambda_0;{\bm D}_{c})\times L_2(\theta; {\bm Z})$, where $L_1({\bm \beta}, \Lambda_0;{\bm D}_{c})=\prod_{i=1}^m \prod_{j=1}^{n_i} \left[z_i \lambda_0(t_{ij})\exp(\textbf{x}_{ij}^\top {\bm \beta})\right]^{\delta_{ij}}\exp(-z_i \Lambda_0(t_{ij})\textrm{e}^{\textbf{x}_{ij}^\top {\bm \beta}})$ and $L_2(\theta; {\bm Z})=\prod_{i=1}^m f(z_i;\theta)$.\\
\noindent The complete log-likelihood function is given by $\ell_c({\bm \beta}, {\bm \Lambda}_0,\theta;{\bm D}_{c})=\ell_{1c}({\bm \beta}, {\bm \Lambda}_0;{\bm D}_{c})+\ell_{2c}(\theta; {\bm Z})$, where
\begin{align*}
\ell_{1c}({\bm \beta}, {\bm \Lambda}_0; {\bm D}_{c})&= \sum_{i=1}^m \sum_{j=1}^{n_i} \left[\delta_{ij}\left(\textbf{x}_{ij}^\top {\bm \beta}+\log \lambda_0(t_{ij})\right)-Z_i \Lambda_0(t_{ij})\textrm{e}^{\textbf{x}_{ij}^\top {\bm \beta}}\right]\\
\ell_{2c}(\theta; {\bm Z})&=m \left[\log(\theta) -\log \Gamma(b_{\theta})-(b_\theta+1)\log(a_\theta) \right]-(b_\theta-1)\sum_{i=1}^m\log(Z_i)-\frac{1}{a_\theta} \sum_{i=1}^m Z_i.
\end{align*}
Let ${\bm \psi}^{(\ell)}=({\bm \beta}^{(\ell)},\Lambda_0^{(\ell)}, \theta^{(\ell)})$ be the estimate of ${\bm \psi}$ at the $\ell$-th iteration and denote $Q({\bm \psi}\mid {\bm \psi}^{(\ell)})$ as the conditional expectation of $\ell_c({\bm \psi})$ given the observed data and ${\bm \psi}^{(\ell)}$. With these notations, we have that $Q({\bm \psi}\mid {\bm \psi}^{(\ell)})=Q_1(({\bm \beta},\Lambda_0)\mid {\bm \psi}^{(\ell)})+Q_2(\theta \mid {\bm \psi}^{(\ell)})$, where
\begin{align*}
Q_1(({\bm \beta},\Lambda_0)\mid {\bm \psi}^{(\ell)})&= \sum_{i=1}^m \sum_{j=1}^{n_i} \left[\delta_{ij}\left(\textbf{x}_{ij}^\top {\bm \beta}+\log[\lambda_0(t_{ij})\right)]-\widehat{z}_i^{(\ell)} \Lambda_0(t_{ij})\textrm{e}^{\textbf{x}_{ij}^\top {\bm \beta}}\right]\\
Q_2(\theta \mid {\bm \psi}^{(\ell)})&=m \left[\log(\theta) -\log \Gamma(b_{\theta})-(b_\theta+1)\log(a_\theta) \right]-(b_\theta-1)\sum_{i=1}^m\widehat{\kappa}_i^{(\ell)}-\frac{1}{a_\theta} \sum_{i=1}^m \widehat{z}_i^{(\ell)},
\end{align*}
where $\widehat{z}^{(\ell)}_i=E(Z_i\mid {\bm D}_{obs}, {\bm \psi}={\bm \psi}^{(\ell)})$ and $\widehat{\kappa}^{(\ell)}_i=E(\log(Z_i)\mid {\bm D}_{obs}, {\bm \psi}={\bm \psi}^{(\ell)})$. It is possible to show that $Z_i \mid t_{i1},\ldots,t_{in_i},\delta_{i1},\ldots,\delta_{in_i} \sim WL\left(\left[a_{\bm \psi}^{(i)}\right]^{-1}, b_{\bm \psi}^{(i)}\right)$.
Therefore, using the results presented in Section \ref{sec2.1} it is immediate that for $i=1,\ldots,m$,
\begin{align}
\widehat{z}_i^{(\ell)}&=\frac{b_{\widehat{\bm \psi}^{(\ell)}}^{(i)} \left(\left[a_{\widehat{\bm \psi}^{(\ell)}}^{(i)}\right]^{-1}+b_{\widehat{\bm \psi}^{(\ell)}}^{(i)}+1\right)}{\left[a_{\widehat{\bm \psi}^{(\ell)}}^{(i)}\right]^{-1}\left(\left[a_{\widehat{\bm \psi}^{(\ell)}}^{(i)}\right]^{-1}+b_{\widehat{\bm \psi}^{(\ell)}}^{(i)}\right)}, \qquad \mbox{and} \label{widez}\\
\widehat{\kappa}_i^{(\ell)}&=-\frac{\left[a_{\widehat{\bm \psi}^{(\ell)}}^{(i)}\right]^{-1}}{b_{\widehat{\bm \psi}^{(\ell)}}^{(i)} \left(\left[a_{\widehat{\bm \psi}^{(\ell)}}^{(i)}\right]^{-1}+b^{(i)}_{\widehat{\bm \psi}^{(\ell)}}\right)}+\Psi\left(b^{(i)}_{\widehat{\bm \psi}^{(\ell)}}+1\right)+\log\left[a^{(i)}_{\widehat{\bm \psi}^{(\ell)}}\right]\label{widekappa}.
\end{align}

On the other hand, using the traditional development of the Cox model, it is possible to construct a discrete version of the cumulative baseline hazard function, replacing $\Lambda_0(t)$ by $\Lambda_0^D(t)=\sum_{k: t_{(k)}\leq t} \lambda_0(t_{(k)})$, where $t_{(1)},\ldots,t_{(q)}$ denotes the ordered distinct failure times $t_{ij}'s$ and $q$ is the number of different observed failure times.
With this, the $Q_1$ function is given by
\begin{align*}
Q_1(({\bm \beta},\Lambda_0)\mid {\bm \psi}^{(\ell)})&= \sum_{k=1}^q d_{(k)} \log \left[\lambda_0(t_{(k)})\right]+\sum_{i=1}^m \sum_{j=1}^{n_i} \delta_{ij}\textbf{x}_{ij}^\top {\bm \beta}-\sum_{k=1}^q \lambda_0(t_{(k)})\sum_{i,j \in R(t_{(k)})} \widehat{z}_i^{(\ell)} \textrm{e}^{\textbf{x}_{ij}^\top {\bm \beta}},
\end{align*}
where $R(t_{(k)})=\{(i,j): t_{ij}>t_{(k)}\}$ are the observations in risk at the time $t_{(k)}$ and $d_{(k)}$ denotes the number of failures at $t_{(k)}$, for $k=1,\ldots,q$. Note that the solution for $\lambda_0(t_{(k)})$ is given by
\begin{equation*}
\widehat{\lambda_0}(t_{(k)})=\frac{d_{(k)}}{\sum_{i,j \in R(t_{(k)})}\exp\left(\textbf{x}_{ij}^\top {\bm \beta}+\log \widehat{z}^{(\ell)}_i\right)}.
\end{equation*}
With this result, the expression for $Q_1$ is reduced to
\begin{align*}
Q_1(({\bm \beta},\Lambda_0)\mid {\bm \psi}^{(\ell)})&=-\sum_{k=1}^q d_{(k)}\log \left(\sum_{i,j \in R(t_{(k)})} \exp\left(\textbf{x}_{ij}^\top{\bm \beta}+\log \widehat{z}_i^{(\ell)}\right)\right)+\sum_{i=1}^m\sum_{j=1}^{n_i} \delta_{ij} \textbf{x}_{ij}^\top {\bm \beta}.
\end{align*}
Note that $Q_1(\cdot)$ has the same form of the partial log-likelihood function of the Cox model, except for the offset $\log \widehat{z}_i^{(\ell)}$. For this, to update ${\bm \beta}$ in the M-step we can use the Cox approach. Finally, the non-parametric estimator for $\Lambda_0(\cdot)$ in the $\ell$-th step of the algorithm is given by
\begin{align*}
\widehat{\Lambda}^{(\ell)}_0(t)=\sum_{k: t_{(k)}\leq t} \frac{d_{(k)}}{\sum_{i,j \in R(t_{(k)})}\exp\left(\textbf{x}_{ij}^\top \widehat{\bm \beta}^{(\ell)}+\log \widehat{z}^{(\ell)}_i\right)}, \quad t>0.
\end{align*}
In short, the EM algorithm is summarized as follows
\begin{itemize}
\item \textbf{E-step:} For $i=1,\ldots, m$, update $\widehat{z}_i^{(\ell+1)}$ and $\widehat{\kappa}_i^{(\ell+1)}$ using Equations (\ref{widez}) and (\ref{widekappa}) with $\widehat{\bm \beta}^{(\ell)}$, $\widehat{\Lambda}^{(\ell)}_0(\cdot)$ and $\widehat{\theta}^{(\ell)}$, the parameters at the $\ell$-th iteration.
\item \textbf{M1-step}: Update $\widehat{\bm \beta}^{(\ell+1)}$ and $\Lambda_0^{(\ell+1)}(\cdot)$ by fitting a Cox regression model with offset $\log \widehat{z}_i^{(\ell+1)}$.
\item \textbf{M2-step}: Update $\widehat{\theta}^{(\ell+1)}$ by maximizing $Q_2(\theta \mid {\bm \psi}^{(\ell)})$ in relation to $\theta$.
\end{itemize}
The E-, M1- and M2-steps are iterated until a convergence criterion is satisfied. For instance,
we consider $\left|\left(\widehat{\bm \beta}^{(\ell+1)},\widehat{\theta}^{(\ell+1)}\right)-\left(\widehat{\bm \beta}^{(\ell)},\widehat{\theta}^{(\ell)}\right)\right|<\epsilon$, where $\epsilon$ is a predefined value. Initial values $\widehat{\bm \beta}^{(0)}$ and $\widehat\Lambda_0^{(0)}(\cdot)$ can be obtained based on the usual Cox regression model. In addition, it is possible to fix $\widehat{z}_i^{(0)}=1$ and $\widehat{\kappa}_i^{(0)}=0$, for $i=1,\ldots,m$, and an arbitrary value for $\theta$. For instance, we use $\widehat{\theta}^{(0)}=0.5$. Standard errors for $\widehat{\bm \beta}$ and $\widehat{\theta}$ can be obtained following the suggestion of \cite{Klein92}. For this, it is considered a profile log-likelihood function, say $\ell({\bm \beta},\theta)$, replacing $\Lambda_0$ by its estimation $\widehat{\Lambda}_0$ and taking the logarithm in Equation (\ref{observed.loglike}). Therefore, the information matrix is computed as $I({\bm \beta},\theta)=-\partial^2 \ell({\bm \beta},\theta)/\partial \left({\bm \beta}, \theta\right)\partial^\top \left({\bm \beta}, \theta\right)$. The variance of the ML estimators, say $\widehat{\bm \beta}$ and $\widehat{\theta}$, can be estimated numerically. For simplicity's sake, we omit such details.

\begin{Remark}
We also implement an EM-type algorithm when the WEI distribution is assumed as the baseline model. In this case, the EM algorithm is essentially the same. However, instead of a Cox regression model, we need to perform a WEI regression model in the M1-step, but considering the $\log\left(\widehat{z}_i^{(\ell)}\right)$ as offset. We implement this using the function \texttt{survfit} included in the package \texttt{survival} \citep{survival-package} of \cite{R2022}.
\end{Remark}

\begin{Remark}
The package \texttt{extrafrail} \citep{extrafrail-package} of \cite{R2022} included the computational implementation for the WL frailty model considering as baseline model the WEI distribution and the semi-parametric specification. For instance, to fit the non-parametric case, it can be used
\begin{verbatim}
frailtyWL(formula, data, dist = "np")
\end{verbatim}
where as is usually in survival analysis with random effects in R, \texttt{formula} can defined as
\begin{verbatim}
Surv(time, event) ~ covariates + cluster(id)
\end{verbatim}
A similar syntax can be used to fit the Weibull case specifying \texttt{dist="weibull"} in last sentence.
\end{Remark}

\newpage

\section{Simulation study}

\noindent In this Section we present two simulation studies related to the WL frailty model with a semiparametric baseline. The first study is devoted to study the recovery parameters under different scenarios and the second study assesses the performance of the model with a misspecification in the frailty distribution.

\subsection{Recovery parameters}
\noindent In this subsection, we study the properties of the ML estimators in finite samples obtained using the EM algorithm discussed in subsection \ref{sec.EM}. The data were drawn from a similar scenery than the application. We considered $\Lambda_0(t)=\lambda \, t^\rho$, i.e., the Weibull distribution. We fixed three cases: i) mean $\mu_w=8.6$ and variance $\sigma_w^2=230$, which implies $\lambda\approx 5.6976$ and $\rho\approx 0.5985$; ii) $\mu_w=6.0$ and $\sigma_w^2=230$, resulting in $\lambda \approx 2.5319$ and $\rho \approx 0.4593$ and; iii) $\mu_w=8.6$ and $\sigma_w^2=100$, resulting in $\lambda \approx 7.9786$ and $\rho \approx 0.8630$. For the clusters, we assumed three  scenarios.
\begin{itemize}
\item Case 1. $m=396$ clusters with the following distribution: 200 clusters with 1 observation, 100 clusters with 2 observations, 50 clusters with 3 observations, 20 clusters with 4 observations, 20 clusters with 5 observations and 6 clusters with 10 observations, totalizing 790 observations.
\item Case 2. $m=396$ clusters with the following distribution: 200 clusters with 2 observations, 100 clusters with 4 observations, 50 clusters with 6 observations, 20 clusters with 8 observations, 20 clusters with 10 observations and 6 clusters with 20 observations, totalizing 1,580 observations.
\item Case 3. $m=792$ clusters with the following distribution: 400 clusters with 1 observation, 200 clusters with 2 observations, 100 clusters with 3 observations, 40 clusters with 4 observations, 40 clusters with 5 observations and 12 clusters with 10 observations, totalizing 1,580 observations.
\end{itemize}
Note that we are considering clusters with different number of observations. We also highlight that Case 2 doubles the observations for each cluster (keeping the number of clusters) and Case 3 doubles the number of clusters (keeping the number of observations for each cluster). Frailty terms were drawn from the WL$(\theta)$ model, where three values were used for $\theta: 0.1, 0.25$ and $0.5$. We
assume the multiplicative hazard model in (\ref{cond.lambda}) with four covariates, one of them simulated from the categorical distribution with probabilities $(0.4,0.4$ and $0.2)$. The other three covariates were drawn from the Bernoulli distribution with success probabilities of $0.7$, $0.6$ and $0.5$, respectively. Therefore, for each individual the covariate vector is $\bf{x}_i^\top=(x_{11i},x_{12i},x_{2i},x_{3i},x_{4i})$. For each case, we consider ${\bm \beta}=(\beta_{11},\beta_{12},\beta_{2},\beta_{3},\beta_{4})=(0.3,1.1,0.4,-0.5,-0.3)$.
We also consider three percentages of censoring data: 0\%, 10\% and 25\%. With this scheme and defining $\xi_i=\bf{x}_i^\top {\bm \beta}$, we obtain
\[
\Lambda_0(t\mid z_i, {\bf{x}}_i)=\lambda \, z_i \, \textrm{e}^{\xi_i} t^\rho,
\]
i.e., $t\mid z_i, \bf{x}_i$ has Weibull distribution with parameters $\lambda_i^*=\lambda \, z_i \, \textrm{e}^{\xi_i}$ and $\rho^*=\rho$, where it is simple to draw values. In order to obtain a $100\times q$\% of censoring times, we fix $C_i$ (the censoring times) as the $100\times (1-q)$-th percentile of the last referred Weibull distribution, so that $P(T_i>C_i)=q$, as requested. For each of the 81 combinations of $(\mu_2, \sigma_w^2)$, clusters, $\theta$ and censoring, we draw 1,000 samples and then, we compute the ML estimators using the EM algorithm implemented in the \texttt{extrafrail} package. Table \ref{sim.1} summarizes the estimated bias (bias), the mean of the estimated standard errors (SE) and the root of the estimated mean squared error (RMSE) for two cases of the 0\% censoring. The remaining cases are given as supplementary material. In general terms, for all the cases the bias is acceptable and the terms SE and RMSE are closer, suggesting that the estimated standard errors for all the estimators are well estimated. Results are similar for different combinations of $\mu_w$ and $\sigma_w^2$. We also note that the frailty variance is better estimated when the number of clusters is maintained, but the observations in each cluster is augmented than the case when the observations in each cluster in maintained, but the number of clusters is augmented. In simple words, to estimate the frailty variance is better to have more intra clusters observations than inter clusters, as expected.

\begin{landscape}
\begin{table}[!h]
\caption{Estimated bias, SE and RMSE for the shared WL frailty semiparametric model (censoring: 0\%).}\label{sim.1}
\resizebox{\linewidth}{!}{
\begin{tabular}{cccrrrrrrrrr}
\hline
           &            &            & \multicolumn{ 3}{c}{$\mu_w=8.6, \sigma_w^2=230$} &  \multicolumn{ 3}{c}{$\mu_w=6.0, \sigma_w^2=230$} & \multicolumn{ 3}{c}{$\mu_w=8.6, \sigma_w^2=100$} \\

     $\theta$ &       $m/n_j$ &  estimator &       bias &        SE &       RMSE &       bias &        SE &       RMSE &       bias &        SE &       RMSE \\
\hline

      0.10 &     case i &     $\widehat{\beta}_{11}$ &    $-$0.0238 &     0.0761 &     0.0894 &    $-$0.0270 &     0.0761 &     0.0879 &    $-$0.0291 &     0.0754 &     0.0892 \\

           &            &     $\widehat{\beta}_{12}$ &    $-$0.0114 &     0.1020 &     0.1206 &    $-$0.0149 &     0.1020 &     0.1173 &    $-$0.0205 &     0.1015 &     0.1151 \\

           &            &      $\widehat{\beta}_{2}$ &     0.0029 &     0.0819 &     0.0826 &     0.0056 &     0.0820 &     0.0798 &     0.0016 &     0.0815 &     0.0815 \\

           &            &      $\widehat{\beta}_{3}$ &    $-$0.0227 &     0.0749 &     0.0823 &    $-$0.0231 &     0.0749 &     0.0836 &    $-$0.0248 &     0.0744 &     0.0835 \\

           &            &      $\widehat{\beta}_{4}$ &     0.0214 &     0.0694 &     0.0819 &     0.0239 &     0.0694 &     0.0803 &     0.0192 &     0.0690 &     0.0818 \\

           &            &      $\widehat{\theta}$ &     0.0210 &     0.0379 &     0.0339 &     0.0207 &     0.0382 &     0.0350 &     0.0204 &     0.0361 &     0.0343 \\
\cline{2-12}
           &    case ii &     $\widehat{\beta}_{11}$ &    $-$0.0012 &     0.0538 &     0.0627 &     0.0007 &     0.0538 &     0.0595 &    $-$0.0012 &     0.0534 &     0.0585 \\

           &            &     $\widehat{\beta}_{12}$ &     0.0229 &     0.0690 &     0.0803 &     0.0281 &     0.0691 &     0.0823 &     0.0281 &     0.0688 &     0.0817 \\

           &            &      $\widehat{\beta}_{2}$ &     0.0087 &     0.0566 &     0.0583 &     0.0100 &     0.0566 &     0.0571 &     0.0090 &     0.0563 &     0.0582 \\

           &            &      $\widehat{\beta}_{3}$ &    $-$0.0027 &     0.0520 &     0.0574 &    $-$0.0019 &     0.0521 &     0.0545 &    $-$0.0043 &     0.0515 &     0.0552 \\

           &            &      $\widehat{\beta}_{4}$ &    $-$0.0010 &     0.0481 &     0.0525 &    $-$0.0019 &     0.0482 &     0.0530 &     0.0008 &     0.0478 &     0.0547 \\

           &            &      $\widehat{\theta}$ &     0.0070 &     0.0236 &     0.0200 &     0.0072 &     0.0237 &     0.0197 &     0.0058 &     0.0228 &     0.0194 \\
\cline{2-12}
           &   case iii &     $\widehat{\beta}_{11}$ &    $-$0.0200 &     0.0553 &     0.0641 &    $-$0.0227 &     0.0553 &     0.0640 &    $-$0.0198 &     0.0548 &     0.0650 \\

           &            &     $\widehat{\beta}_{12}$ &     0.0148 &     0.0706 &     0.0781 &     0.0162 &     0.0706 &     0.0796 &     0.0150 &     0.0702 &     0.0809 \\

           &            &      $\widehat{\beta}_{2}$ &     0.0019 &     0.0578 &     0.0579 &     0.0066 &     0.0578 &     0.0582 &     0.0015 &     0.0575 &     0.0613 \\

           &            &      $\widehat{\beta}_{3}$ &    $-$0.0092 &     0.0529 &     0.0552 &    $-$0.0091 &     0.0529 &     0.0594 &    $-$0.0083 &     0.0525 &     0.0567 \\

           &            &      $\widehat{\beta}_{4}$ &     0.0000 &     0.0492 &     0.0556 &     0.0052 &     0.0492 &     0.0534 &     0.0035 &     0.0488 &     0.0525 \\

           &            &      $\widehat{\theta}$ &     0.0225 &     0.0266 &     0.0304 &     0.0231 &     0.0268 &     0.0308 &     0.0202 &     0.0254 &     0.0282 \\
\hline
      0.25 &     case i &     $\widehat{\beta}_{11}$ &    $-$0.0216 &     0.0813 &     0.0862 &    $-$0.0215 &     0.0815 &     0.0884 &    $-$0.0178 &     0.0803 &     0.0866 \\

           &            &     $\widehat{\beta}_{12}$ &    $-$0.0602 &     0.1087 &     0.1257 &    $-$0.0571 &     0.1087 &     0.1321 &    $-$0.0633 &     0.1078 &     0.1287 \\

           &            &      $\widehat{\beta}_{2}$ &    $-$0.0402 &     0.0881 &     0.0941 &    $-$0.0403 &     0.0882 &     0.0960 &    $-$0.0369 &     0.0871 &     0.0915 \\

           &            &      $\widehat{\beta}_{3}$ &     0.0459 &     0.0798 &     0.0947 &     0.0508 &     0.0799 &     0.0949 &     0.0483 &     0.0788 &     0.0931 \\

           &            &      $\widehat{\beta}_{4}$ &    $-$0.0288 &     0.0746 &     0.0821 &    $-$0.0270 &     0.0746 &     0.0812 &    $-$0.0243 &     0.0737 &     0.0807 \\

           &            &      $\widehat{\theta}$ &    $-$0.0208 &     0.0441 &     0.0420 &    $-$0.0198 &     0.0444 &     0.0408 &    $-$0.0266 &     0.0414 &     0.0447 \\
\cline{2-12}
           &    case ii &     $\widehat{\beta}_{11}$ &     0.0152 &     0.0574 &     0.0609 &     0.0138 &     0.0575 &     0.0645 &     0.0143 &     0.0568 &     0.0628 \\

           &            &     $\widehat{\beta}_{12}$ &    $-$0.0302 &     0.0740 &     0.0834 &    $-$0.0290 &     0.0741 &     0.0830 &    $-$0.0291 &     0.0735 &     0.0839 \\

           &            &      $\widehat{\beta}_{2}$ &    $-$0.0088 &     0.0602 &     0.0591 &    $-$0.0085 &     0.0603 &     0.0598 &    $-$0.0069 &     0.0597 &     0.0602 \\

           &            &      $\widehat{\beta}_{3}$ &     0.0079 &     0.0554 &     0.0568 &     0.0069 &     0.0555 &     0.0588 &     0.0077 &     0.0547 &     0.0583 \\

           &            &      $\widehat{\beta}_{4}$ &     0.0276 &     0.0513 &     0.0625 &     0.0294 &     0.0514 &     0.0617 &     0.0280 &     0.0508 &     0.0616 \\

           &            &      $\widehat{\theta}$ &     0.0014 &     0.0333 &     0.0281 &     0.0005 &     0.0334 &     0.0265 &    $-$0.0026 &     0.0320 &     0.0269 \\
\cline{2-12}
           &   case iii &     $\widehat{\beta}_{11}$ &    $-$0.0164 &     0.0586 &     0.0630 &    $-$0.0146 &     0.0587 &     0.0627 &    $-$0.0133 &     0.0580 &     0.0605 \\

           &            &     $\widehat{\beta}_{12}$ &    $-$0.0456 &     0.0752 &     0.0907 &    $-$0.0488 &     0.0752 &     0.0921 &    $-$0.0484 &     0.0747 &     0.0907 \\

           &            &      $\widehat{\beta}_{2}$ &    $-$0.0238 &     0.0613 &     0.0641 &    $-$0.0216 &     0.0614 &     0.0616 &    $-$0.0233 &     0.0609 &     0.0628 \\

           &            &      $\widehat{\beta}_{3}$ &    $-$0.0037 &     0.0561 &     0.0557 &     0.0009 &     0.0562 &     0.0584 &     0.0000 &     0.0555 &     0.0553 \\

           &            &      $\widehat{\beta}_{4}$ &     0.0044 &     0.0522 &     0.0552 &     0.0055 &     0.0523 &     0.0561 &     0.0039 &     0.0518 &     0.0531 \\

           &            &      $\widehat{\theta}$ &    $-$0.0282 &     0.0313 &     0.0400 &    $-$0.0279 &     0.0315 &     0.0398 &    $-$0.0310 &     0.0296 &     0.0415 \\
\hline
\end{tabular}
}
\end{table}

\end{landscape}

\subsection{Assessing the baseline distribution}
\noindent

We devote this subsection to assessing the estimation of the baseline distribution when the frailty WL model is used with a non-parametric specification for $\Lambda_0(\cdot)$. For this we draw the covariates using the same specification as the last study. The frailty terms were drawn from the WL model with mean 1 and variance equal to $0.1$, $0.25$ and $0.5$. In addition, we also consider three different distributions, implying a misspecification problem for the frailty distribution. We consider $W$ with uniform distribution between 0 and 2; with distribution gamma with shape and rate equal to 1; and with log-normal distribution with location $-\log(2)/2$ and scale $\log(2)$. Those cases imply mean 1 for the distribution and variance $1/3$, $1$ and $1$, respectively. For all the cases, we consider the true baseline coming from the Weibull model with the same three specifications previously used: i) mean $\mu_w=8.6$ and variance $\sigma_w^2=230$; ii) $\mu_w=6.0$ and $\sigma_w^2=230$ and; iii) $\mu_w=8.6$ and $\sigma_w^2=100$. In order to assess the estimation for the non-parametric baseline estimator, we use two measures: mean and median. Theoretical values (well known $\mu_w=\lambda \Gamma(1+1/\rho)$ and $\xi_w=\lambda [\log(2)]^{1/\rho}$ for mean and median, respectively) are compared with the respective values based on $\widehat{\Lambda}_0(\cdot)$. As for positive random variables it is valid that $E(T)=\int_0^\infty S(t)dt$, we can use the area under curve of $\widehat{S}_0(t)=\exp(-\widehat{\Lambda}_0(t))$ to estimate the mean of the baseline distribution as
\begin{align*}
\widehat{\mu_w}&=t_{(1)}+\sum_{i=1}^{k-1} \left(t_{(j+1)}-t_{(j)}\right)\widehat{S}_0(t_{(j)}),
\end{align*}
where $t_{(1)} < t_{(2)} < \cdots < t_{(k)}$ denote the different failure times observed in the sample. On the other hand, to estimate the median of the distribution (say $\widehat{\xi}_w$) we use a linear interpolation between $\left(\tau_{k^*},\widehat{S}_0(\tau_{k^*})\right)$ and $\left(\tau_{k^*+1},\widehat{S}_0(\tau_{k^*+1})\right)$, where $k^*=\{i\in \{1,\ldots,k\}: \widehat{S}_0(t_{i})\leq 0.5 \wedge \widehat{S}_0(t_{i+1})\geq 0.5\}$. The results are summarized considering the bias (B) and relative bias (RB) with their corresponding standard deviation (say B-SD and RB-SD, respectively) for the three measures, i.e.,
\begin{align*}
\mbox{B}(\varphi)&=\sum_{j=1}^{1000} (\widehat{\varphi}_i-\varphi), \quad \mbox{RB}(\varphi)=\sum_{j=1}^{1000} \left(\frac{\widehat{\varphi}_i-\varphi}{\varphi}\right),\\
\mbox{B-SD}(\varphi)&=\sum_{j=1}^{1000} (\widehat{\varphi}_i-\varphi)^2 \quad \mbox{and} \quad \mbox{RB-SD}(\varphi)=\sum_{j=1}^{1000} \left(\frac{\widehat{\varphi}_i-\varphi}{\varphi}\right)^2,
\end{align*}
for $\varphi \in \{\mu_w, \xi_w\}$. Table \ref{sim.2} summarizes the results for the case with 0\% of censoring. The cases with 10\% and 25\% of censoring are presented as supplementary material. Note that when the frailty distribution is well specified, the maximum RB attaches 0.045, 0.207 and 0.387 for $\theta=0.1$, $0.25$ and $0.5$, respectively, for the mean and a similar pattern is obtained for the mean. Therefore, an increment in the frailty variance deteriorates the estimation for the baseline distribution. Specifically, the baseline distribution is overestimated with increasing frailty variance. This overestimation is also illustrated in Figure \ref{np.versus.true}. On the other hand, for the cases where the frailty distribution was misspecified, the RB ranges from 0.223 to 1.552 and therefore, a misspecification in the frailty distribution also helps to overestimate the baseline distribution. Finally, when the distribution is well specified, a decrement in the mean or in the standard deviation of the true baseline distribution reduces the bias of the baseline estimator or at least makes it more homogeneous.

\begin{table}[!htp]
\caption{Estimated bias for mean, median and standard deviation of the baseline distribution using shared WL frailty semiparametric model (censoring: 0\%).}\label{sim.2}
\resizebox{\textwidth}{!}{
\begin{tabular}{rrrrrrrrrrrrrrrrrrrrrrrrrrr}
\hline
           &            &            &                           \multicolumn{ 4}{c}{WL ($\theta=0.1$)} &                           \multicolumn{ 4}{c}{WL ($\theta=0.25$)} &                           \multicolumn{ 4}{c}{WL ($\theta=0.5$)} &                            \multicolumn{ 4}{c}{U ($\theta=0.33$)} &                        \multicolumn{ 4}{c}{gamma ($\theta=1$)} &                           \multicolumn{ 4}{c}{LN ($\theta=1$)} \\

      $\mu_w - \sigma_w^2$ &        m/n &    measure &          B &       B-SD &         RB &      RB-SD &          B &       B-SD &         RB &      RB-SD &          B &       B-SD &         RB &      RB-SD &          B &       B-SD &         RB &      RB-SD &          B &       B-SD &         RB &      RB-SD &          B &       B-SD &         RB &      RB-SD \\
\hline

         8.6 - 230 &        case i &         $\mu_w$ &      0.139 &      1.238 &      0.016 &      0.144 &      0.931 &      1.331 &      0.108 &      0.155 &      2.519 &      1.547 &      0.293 &      0.180 &      3.444 &      1.706 &      0.400 &      0.198 &      6.399 &      2.034 &      0.744 &      0.236 &      3.657 &      1.731 &      0.425 &      0.201 \\

           &            &         $\xi_w$ &      0.077 &      0.496 &      0.025 &      0.161 &      0.137 &      0.494 &      0.044 &      0.160 &      0.524 &      0.591 &      0.170 &      0.191 &      1.018 &      0.659 &      0.329 &      0.213 &      1.639 &      0.840 &      0.531 &      0.272 &      1.235 &      0.736 &      0.400 &      0.238 \\

\cline{2-27}
           &        case ii &         $\mu_w$ &      0.367 &      0.851 &      0.043 &      0.099 &      1.419 &      0.938 &      0.165 &      0.109 &      2.584 &      1.067 &      0.300 &      0.124 &      2.893 &      1.113 &      0.336 &      0.129 &      5.689 &      1.329 &      0.662 &      0.155 &      3.930 &      1.153 &      0.457 &      0.134 \\

           &            &         $\xi_w$ &      0.177 &      0.361 &      0.057 &      0.117 &      0.460 &      0.403 &      0.149 &      0.130 &      0.742 &      0.436 &      0.240 &      0.141 &      1.056 &      0.483 &      0.342 &      0.156 &      1.853 &      0.603 &      0.600 &      0.195 &      1.407 &      0.510 &      0.455 &      0.165 \\

\cline{2-27}

           &       case iii &         $\mu_w$ &      0.108 &      0.879 &      0.013 &      0.102 &      0.609 &      0.957 &      0.071 &      0.111 &      2.721 &      1.073 &      0.316 &      0.125 &      2.478 &      1.117 &      0.288 &      0.130 &      9.812 &      1.631 &      1.141 &      0.190 &      2.846 &      1.076 &      0.331 &      0.125 \\

           &            &         $\xi_w$ &      0.090 &      0.345 &      0.029 &      0.112 &      0.076 &      0.355 &      0.025 &      0.115 &      0.700 &      0.425 &      0.227 &      0.138 &      0.684 &      0.435 &      0.222 &      0.141 &      2.949 &      0.776 &      0.955 &      0.251 &      0.891 &      0.442 &      0.288 &      0.143 \\

\hline

         6.0 - 230 &         case i &         $\mu_w$ &     -0.022 &      1.042 &     -0.004 &      0.174 &      0.748 &      1.238 &      0.125 &      0.206 &      2.201 &      1.381 &      0.367 &      0.230 &      3.135 &      1.534 &      0.522 &      0.256 &      6.001 &      1.941 &      1.000 &      0.324 &      3.226 &      1.564 &      0.538 &      0.261 \\

           &            &         $\xi_w$ &      0.033 &      0.231 &      0.029 &      0.203 &      0.069 &      0.260 &      0.061 &      0.228 &      0.250 &      0.290 &      0.219 &      0.254 &      0.516 &      0.342 &      0.452 &      0.300 &      0.851 &      0.470 &      0.746 &      0.413 &      0.631 &      0.383 &      0.553 &      0.336 \\

\cline{2-27}

           &        case ii &         $\mu_w$ &      0.246 &      0.751 &      0.041 &      0.125 &      1.241 &      0.933 &      0.207 &      0.156 &      2.264 &      0.983 &      0.377 &      0.164 &      2.495 &      0.985 &      0.416 &      0.164 &      5.252 &      1.302 &      0.875 &      0.217 &      3.506 &      1.077 &      0.584 &      0.180 \\

           &            &         $\xi_w$ &      0.093 &      0.170 &      0.082 &      0.149 &      0.233 &      0.208 &      0.204 &      0.182 &      0.368 &      0.228 &      0.323 &      0.200 &      0.531 &      0.249 &      0.466 &      0.218 &      0.968 &      0.349 &      0.849 &      0.306 &      0.730 &      0.280 &      0.640 &      0.246 \\

\cline{2-27}

           &       case iii &         $\mu_w$ &      0.029 &      0.739 &      0.005 &      0.123 &      0.549 &      0.812 &      0.091 &      0.135 &      2.320 &      1.029 &      0.387 &      0.172 &      2.153 &      0.981 &      0.359 &      0.163 &      9.314 &      1.582 &      1.552 &      0.264 &      2.519 &      1.005 &      0.420 &      0.168 \\

           &            &         $\xi_w$ &      0.050 &      0.163 &      0.044 &      0.143 &      0.050 &      0.169 &      0.043 &      0.149 &      0.341 &      0.218 &      0.299 &      0.192 &      0.335 &      0.216 &      0.294 &      0.190 &      1.581 &      0.451 &      1.387 &      0.396 &      0.448 &      0.236 &      0.393 &      0.207 \\

\hline

         8.6 - 100 &         case i &         $\mu_w$ &      0.167 &      0.885 &      0.019 &      0.103 &      0.722 &      0.979 &      0.084 &      0.114 &      1.774 &      1.108 &      0.206 &      0.129 &      2.632 &      1.195 &      0.306 &      0.139 &      4.931 &      1.486 &      0.573 &      0.173 &      2.722 &      1.179 &      0.316 &      0.137 \\

           &            &         $\xi_w$ &      0.083 &      0.568 &      0.016 &      0.109 &      0.159 &      0.613 &      0.031 &      0.117 &      0.540 &      0.659 &      0.104 &      0.126 &      1.126 &      0.703 &      0.216 &      0.135 &      1.846 &      0.850 &      0.354 &      0.163 &      1.354 &      0.745 &      0.259 &      0.143 \\

\cline{2-27}

           &        case ii &         $\mu_w$ &      0.385 &      0.628 &      0.045 &      0.073 &      1.155 &      0.673 &      0.134 &      0.078 &      1.989 &      0.751 &      0.231 &      0.087 &      2.205 &      0.783 &      0.256 &      0.091 &      4.258 &      0.969 &      0.495 &      0.113 &      2.932 &      0.800 &      0.341 &      0.093 \\

           &            &         $\xi_w$ &      0.203 &      0.402 &      0.039 &      0.077 &      0.529 &      0.441 &      0.101 &      0.085 &      0.866 &      0.476 &      0.166 &      0.091 &      1.183 &      0.494 &      0.227 &      0.095 &      2.026 &      0.609 &      0.388 &      0.117 &      1.556 &      0.522 &      0.298 &      0.100 \\

\cline{2-27}

           &       case iii &         $\mu_w$ &      0.194 &      0.623 &      0.023 &      0.072 &      0.544 &      0.681 &      0.063 &      0.079 &      1.975 &      0.779 &      0.230 &      0.091 &      1.915 &      0.823 &      0.223 &      0.096 &      7.407 &      1.173 &      0.861 &      0.136 &      2.159 &      0.765 &      0.251 &      0.089 \\

           &            &         $\xi_w$ &      0.098 &      0.405 &      0.019 &      0.078 &      0.102 &      0.408 &      0.020 &      0.078 &      0.756 &      0.484 &      0.145 &      0.093 &      0.800 &      0.491 &      0.153 &      0.094 &      3.118 &      0.713 &      0.598 &      0.137 &      0.990 &      0.483 &      0.190 &      0.093 \\

\hline
\end{tabular}
}
\end{table}

\begin{figure}
	\subfigure[$\theta=0.1$]{\includegraphics[height=5.3cm,width=5.3cm]{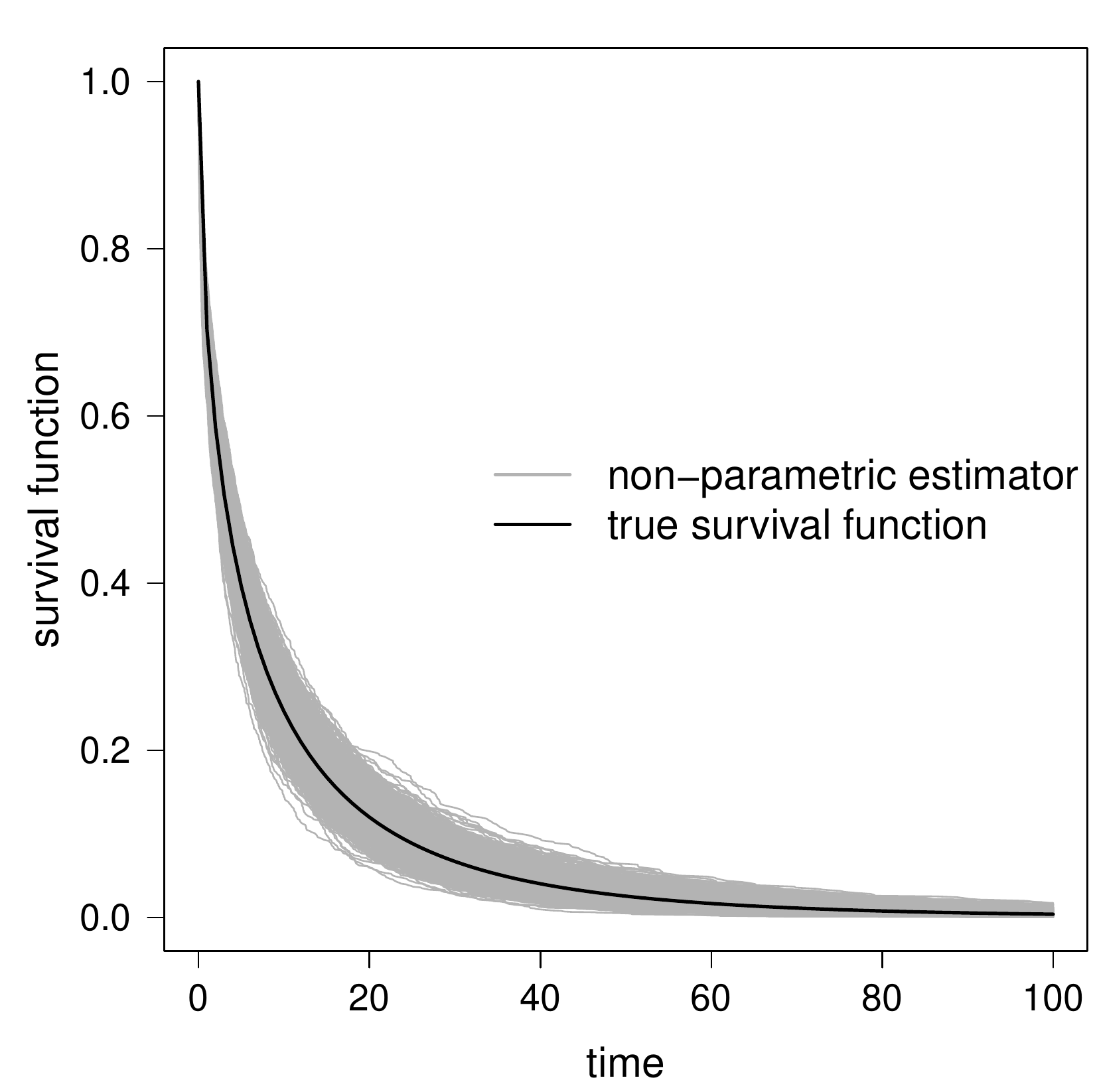}}
	\subfigure[$\theta=0.25$]{\includegraphics[height=5.3cm,width=5.3cm]{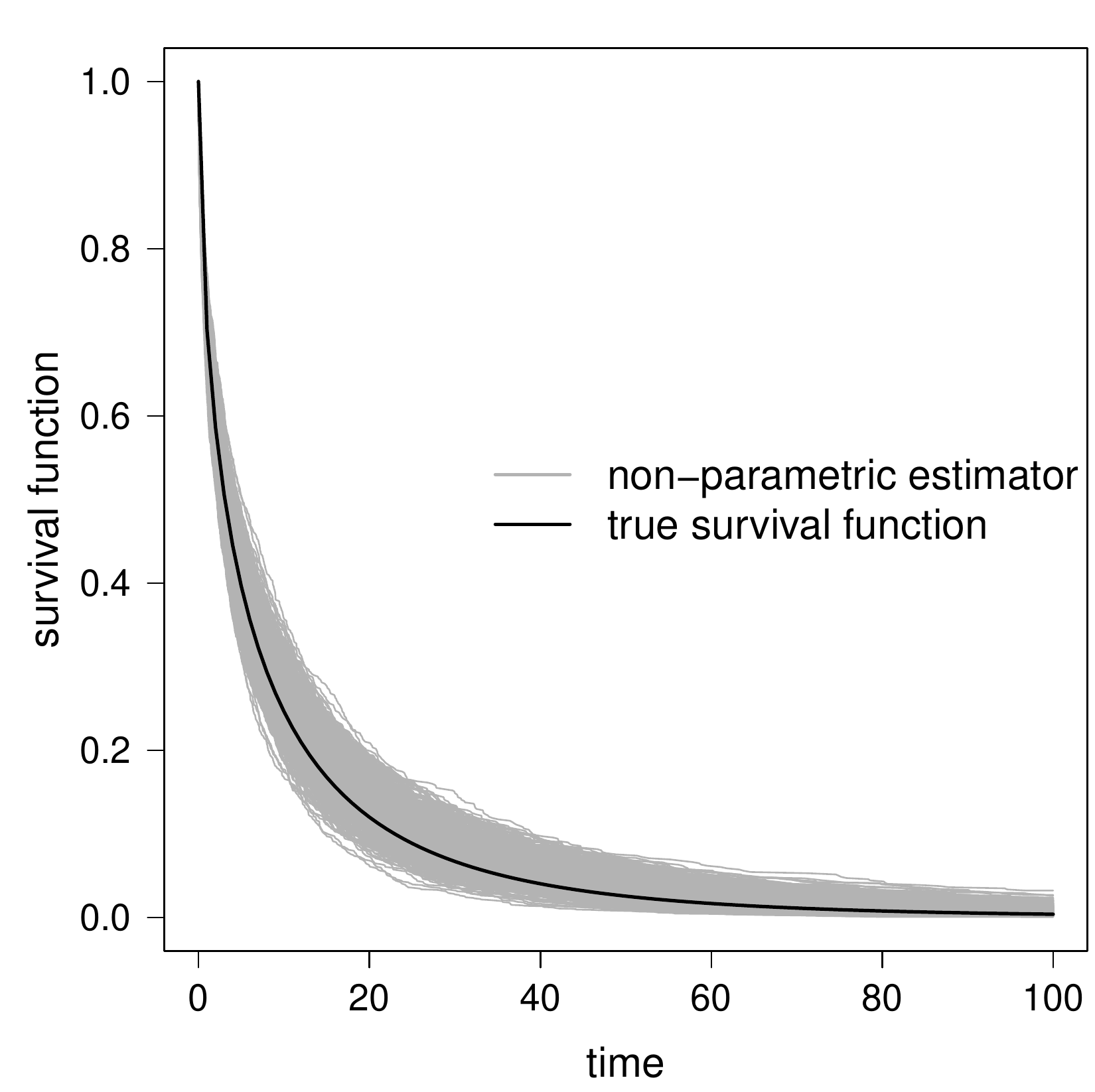}}
	\subfigure[$\theta=0.5$]{\includegraphics[height=5.3cm,width=5.3cm]{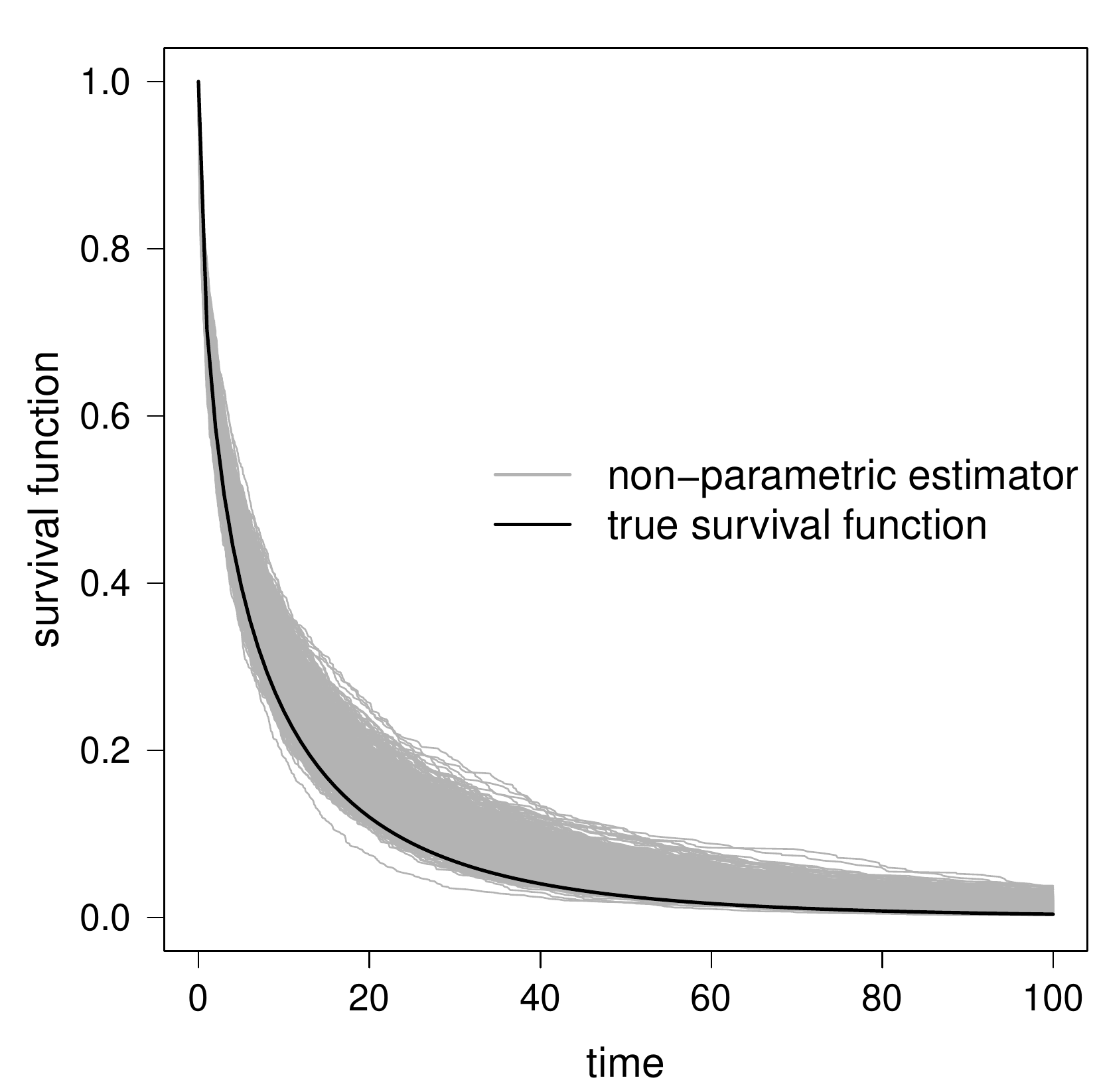}}
	\vspace{-0.25cm}
	\caption{Non-parametric estimator for the baseline survival function for 1,000 replicates when the data generation and estimation is based on the WL$(\theta$) model with a) $\theta=0.1$, b) $\theta=0.25$ and c) $\theta=0.5$, $\mu_w=8.6$, $\sigma_w^2=230$ and case i) for the specification of the number of clusters/observations in each cluster.}
	\label{np.versus.true}
\end{figure}

\newpage

\section{Real-world data analysis}\label{aplic}
\noindent

In this section, we present an application of the proposed model to real data for illustrative purposes.
The data set is related to rehospitalization times after surgery in patients diagnosed with colorectal cancer. It was presented for the first time in \cite{Gonzalez2005} and it is available in the \texttt{frailtySurv} \citep{frailtySurv2018} package of R \citep{R2022}. As the times related to a readmission are from the same individual, it is natural to use a frailty model in this context. We consider interocurrence or censoring time (in days) as the response variable. In total, the data contains 861 measures related to 403 patients (mean: 480.01, median: 216, standard deviation: 558.17, 47\% of times were censored). The distribution of the observations in each cluster is given in Table \ref{clusters}. Note that an individual (a cluster) has 23 observations.
\begin{table}[!h]
\caption{Distribution for the number of observations in each cluster.}\label{clusters}
    \centering
\resizebox{\linewidth}{!}{
\begin{tabular}{cccccccccccccc}
\hline
Observations in the cluster & 1 & 2  & 3 &  4 &  5 &  6 &  7 &  9 & 10 & 11 & 12 & 17 & 23 \\
\hline
Number of clusters & 199 & 105 & 45 & 21 & 15 &  8 &  4 &  1 &  1 &  1 &  1 &  1 &  1 \\
\hline
\end{tabular}
}
\end{table}

The considered covariates were: \texttt{dukes}, Dukes' tumoral stage (A-B: 324, C: 331 and D: 206); \texttt{charlson}, the comorbidity Charlson's index (0: 577 and 1-2-3: 284); \texttt{sex} (male: 549 and female: 312); and \texttt{chemo}, if the patient received chemotherapy (non-treated: 468 and treated: 393). Figure \ref{KM.app} shows the Kaplan-Meier estimator for the four covariates. Note that apparently the four covariates influence the rehospitalization time.

\begin{figure}[!h]
\begin{minipage}[b]{0.23\linewidth}
\centering
\includegraphics[width=3.7cm]{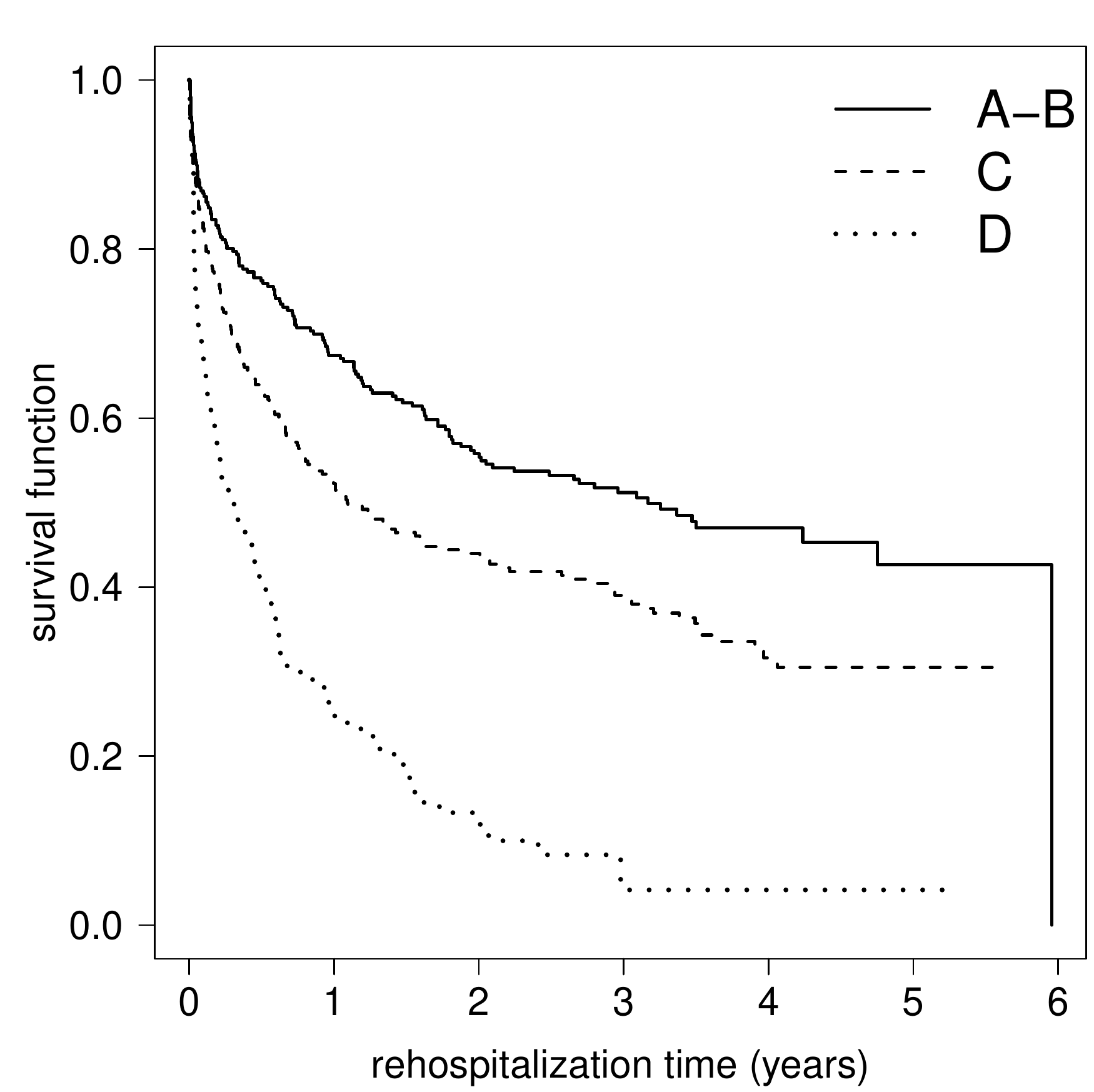}
\caption*{(a) \texttt{dukes}}
\end{minipage} 
\hspace{0.2cm}
\begin{minipage}[b]{0.23\linewidth}
\centering
\includegraphics[width=3.7cm]{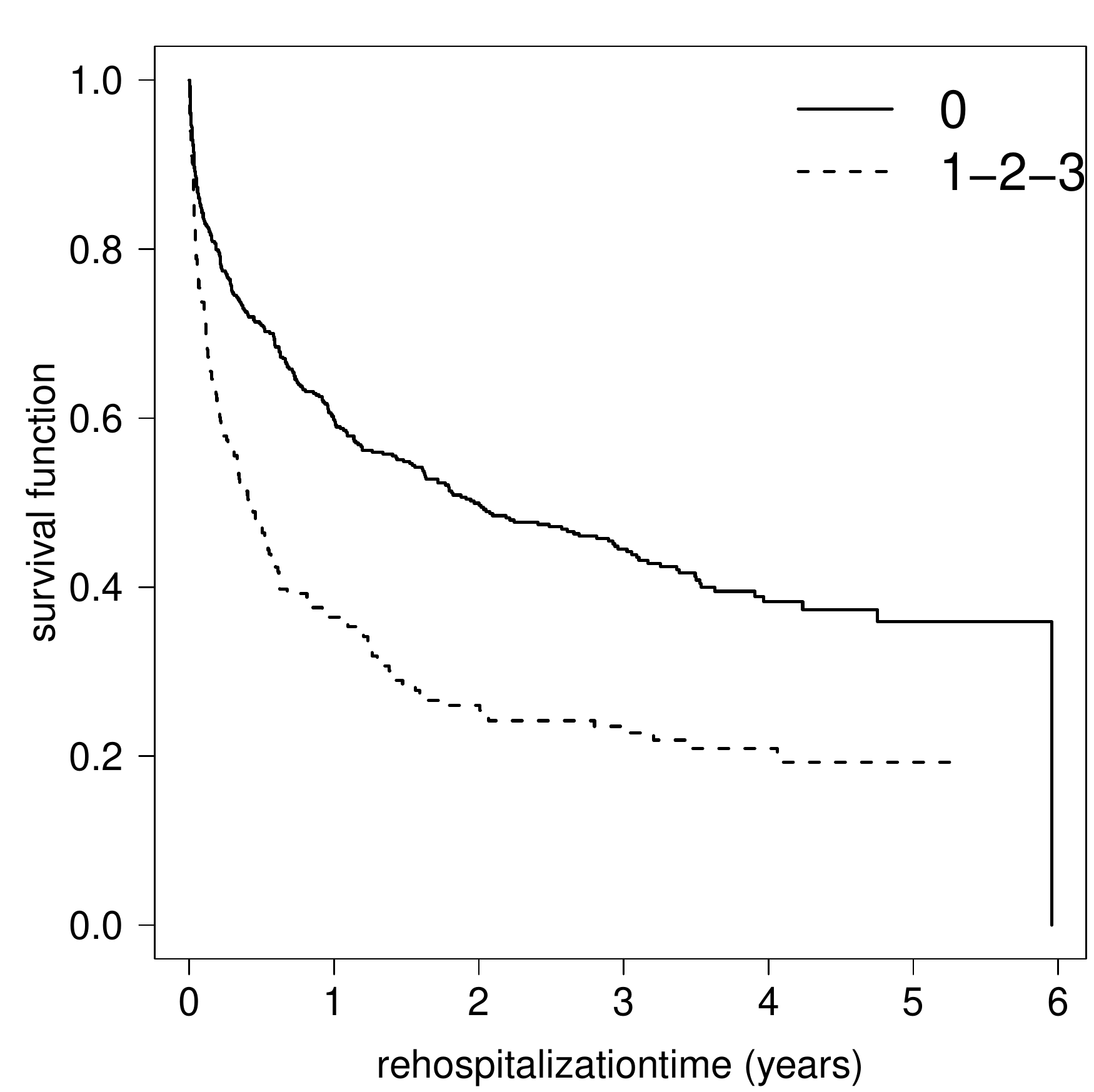}
\caption*{(b) \texttt{charlson}}
\end{minipage}
\hspace{0.2cm}
\begin{minipage}[b]{0.23\linewidth}
\centering
\includegraphics[width=3.7cm]{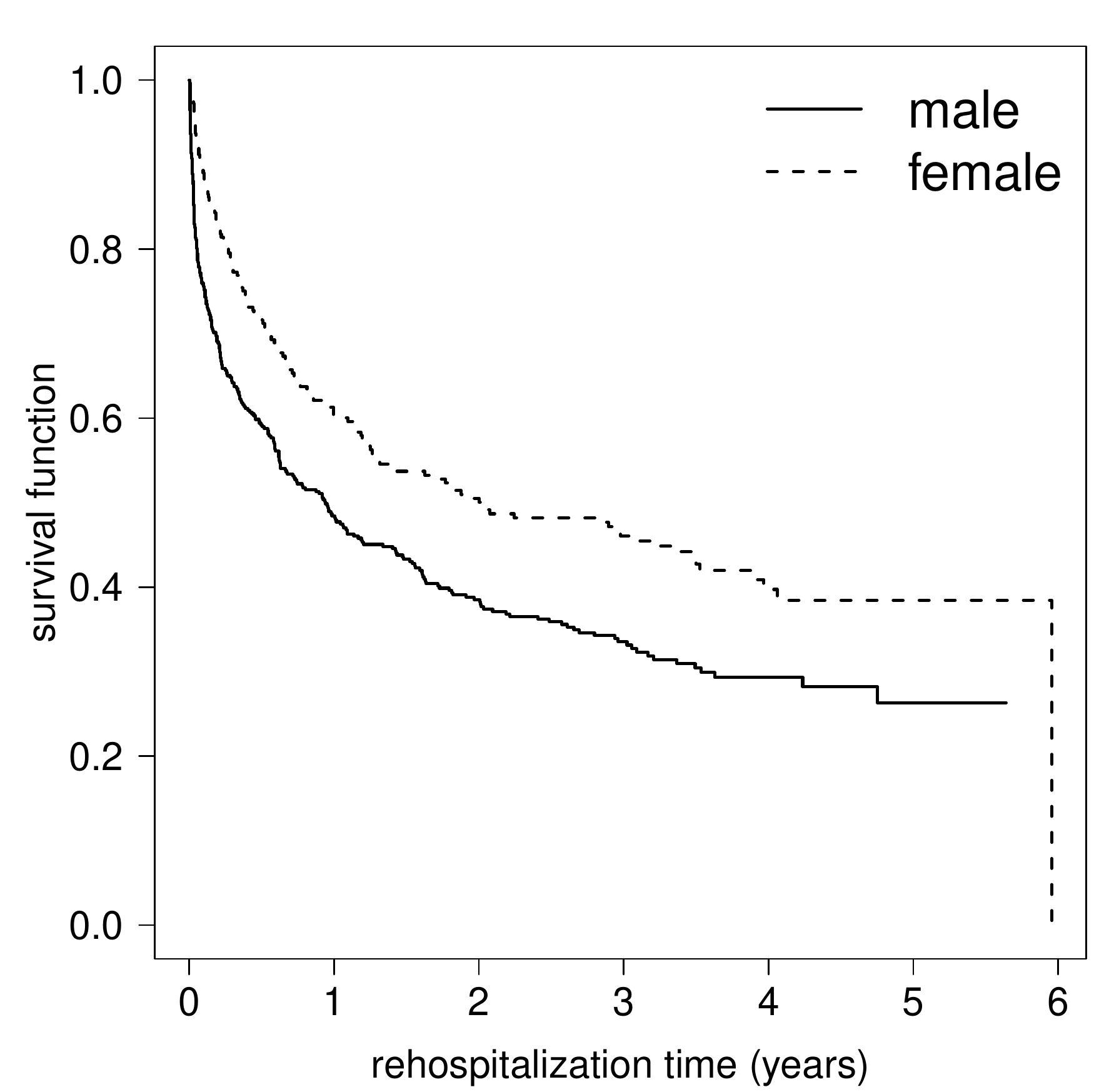}
\caption*{(c) \texttt{sex}}
\end{minipage} 
\hspace{0.2cm}
\begin{minipage}[b]{0.23\linewidth}
\centering
\includegraphics[width=3.7cm]{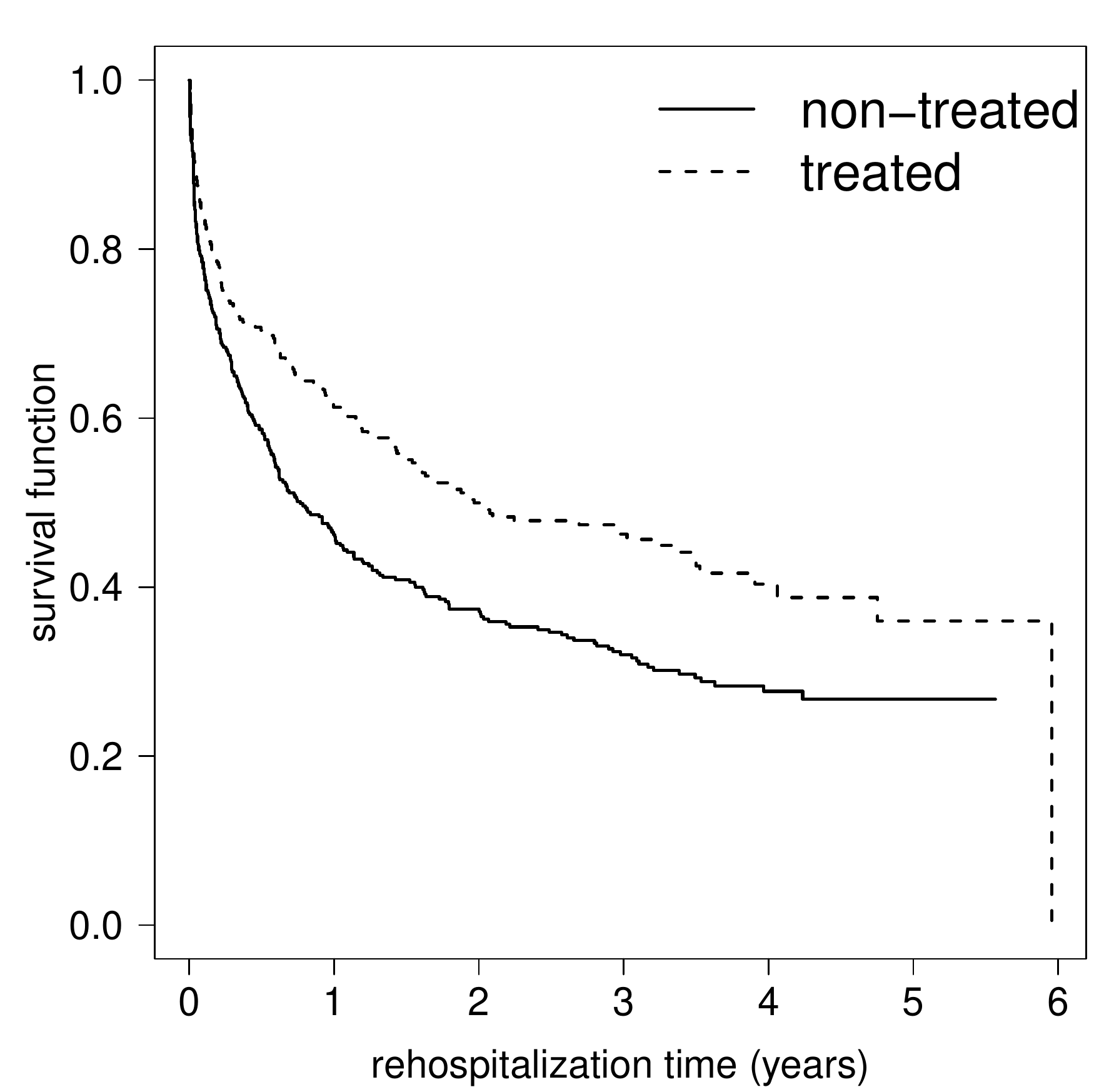}
\caption*{(d) \texttt{chemo}}
\end{minipage} 
\caption{Kaplan-Meier estimator for the four covariates in readmission data set.}
\label{KM.app}
\end{figure}

For the analysis, we considered the semi-parametric model with frailty WL, gamma and IG distributions (we refer to such models as semi-WL, semi-gamma and semi-IG, respectively) and the Weibull distribution with frailty WL, gamma and IG distributions (we refer to such models as WEI-WL, WEI-gamma and WEI-IG, respectively). To obtain the estimates for the semi-gamma and semi-IG models, we use the \texttt{coxph} function included in the \texttt{survival} package \citep{survival-package}, where the standard error term is not presented for $\theta$ (the frailty variance) and for the WEI-gamma and WEI-IG we use the \texttt{parfm} function from the \texttt{parfm} package \citep{parfm-package}. Table \ref{App.readmission} shows the estimates for such models. We highlight that all the models suggest that the inclusion of the frailty terms is necessary. We also highlight that for the semi-WL and WEI-WL all the estimates for the regression coefficients are significative using a 5\%, differently from the other models where the coefficients related to \texttt{dukesC} and \texttt{chemo} were non-significant. On the other hand, the estimated variance for the semi-WL and WEI-WL are very close (0.619 and 0.615, respectively), different to the semi-gamma with WEI-gamma and semi-IG and WEI-IG models, where there is a difference around 15\% among the estimated variance using the semi-parametric and the Weibull model. In this line, the WEI-gamma and WEI-IG estimate a greater Kendall's $\tau$ in comparison with the semi-gamma and semi-IG models, respectively. In this sense, the use of the WL distribution for frailty provides robustness. For this reason,
henceforth we consider the analysis with the semi-WL model.

\begin{table}[!h]
\caption{Estimation for readmission data set using the semiparametric model and the Weibull model with different frailty distributions: WL, gamma and IG.} \label{App.readmission}
    \centering
\resizebox{\linewidth}{!}{
\begin{tabular}{crrrrrrrrrrrrrrrrrr}
\hline

           &         \multicolumn{ 3}{c}{semi-WL} &      \multicolumn{ 3}{c}{semi-gamma} &    \multicolumn{ 3}{c}{semi-IG}     &          \multicolumn{ 3}{c}{WEI-WL} &       \multicolumn{ 3}{c}{WEI-gamma} &          \multicolumn{ 3}{c}{WEI-IG} \\

 Parameter &  Estimated &            &       s.e. &  Estimated &            &       s.e. &  Estimated &            &       s.e. &  Estimated &            &       s.e. &  Estimated &            &       s.e. &  Estimated &            &       s.e. \\
\hline

    \texttt{dukesC} &      0.308 &          * &      0.121 &      0.293 &            &      0.156 &      0.294 &            &      0.159 &      0.313 &          * &      0.190 &      0.293 &            &      0.161 &      0.297 &            &      0.165 \\

    \texttt{dukesD} &      1.125 &          * &      0.170 &      1.016 &          * &      0.187 &      1.067 &          * &      0.190 &      1.110 &          * &      0.213 &      1.076 &          * &      0.193 &      1.142 &          * &      0.198 \\

\texttt{charlson1-2-3} &      0.420 &          * &      0.119 &      0.402 &          * &      0.124 &      0.358 &          * &      0.123 &      0.372 &          * &      0.121 &      0.430 &          * &      0.127 &      0.379 &          * &      0.126 \\

    \texttt{sex} &     $-$0.578 &          * &      0.127 &     $-$0.516 &          * &      0.135 &     $-$0.495 &          * &      0.137 &     $-$0.495 &          * &      0.145 &     $-$0.525 &          * &      0.139 &     $-$0.502 &          * &      0.142 \\

   \texttt{chemo} &     $-$0.267 &          * &      0.105 &     $-$0.203 &            &      0.138 &     $-$0.202 &            &      0.141 &     $-$0.284 &          * &      0.169 &     $-$0.189 &            &      0.143 &     $-$0.188 &            &      0.147 \\
\hline

     $\theta$ &      0.619 &            &      0.123 &      0.589 &            &            &      0.654 &            &            &      0.615 &            &      0.179 &      0.688 &            &      0.142 &      0.786 &            &      0.197 \\

       Kendall's $\tau$ &      0.246 &            &            &      0.228 &            &            &      0.177      &            &            &      0.245 &            &            &      0.256 &            &            &      0.197 &            &            \\
\hline
$\rho$ &            &            &            &            &            &            &            &            &            &      0.440 &            &      0.019 &      0.641 &            &      0.026 &      0.643 &            &      0.026 \\

    $\lambda$ &            &            &            &            &            &            &            &            &            &      0.514 &            &      0.112 &      0.449 &            &      0.069 &      0.446 &            &      0.070 \\
\hline
\multicolumn{19}{l}{{*Significative coefficients based on a significance level of 5\%}}\\

\end{tabular}
}
\end{table}

Figure \ref{frailty.semiWL} shows the estimated frailties for each patient. Note that patients 274 and 318 appear as the patients with a higher risk, whereas patients 80 and 268 appear as the patients with a lower risk. This is corroborated by the descriptive analysis in Table \ref{App.readmission}.
Finally, Figure \ref{survival.app} presents the univariate survival function for one time related to patients 80 and 274 and the marginal survival function for one specific profile.

\begin{figure}[!htbp]
\centering
\includegraphics[width=5.5cm]{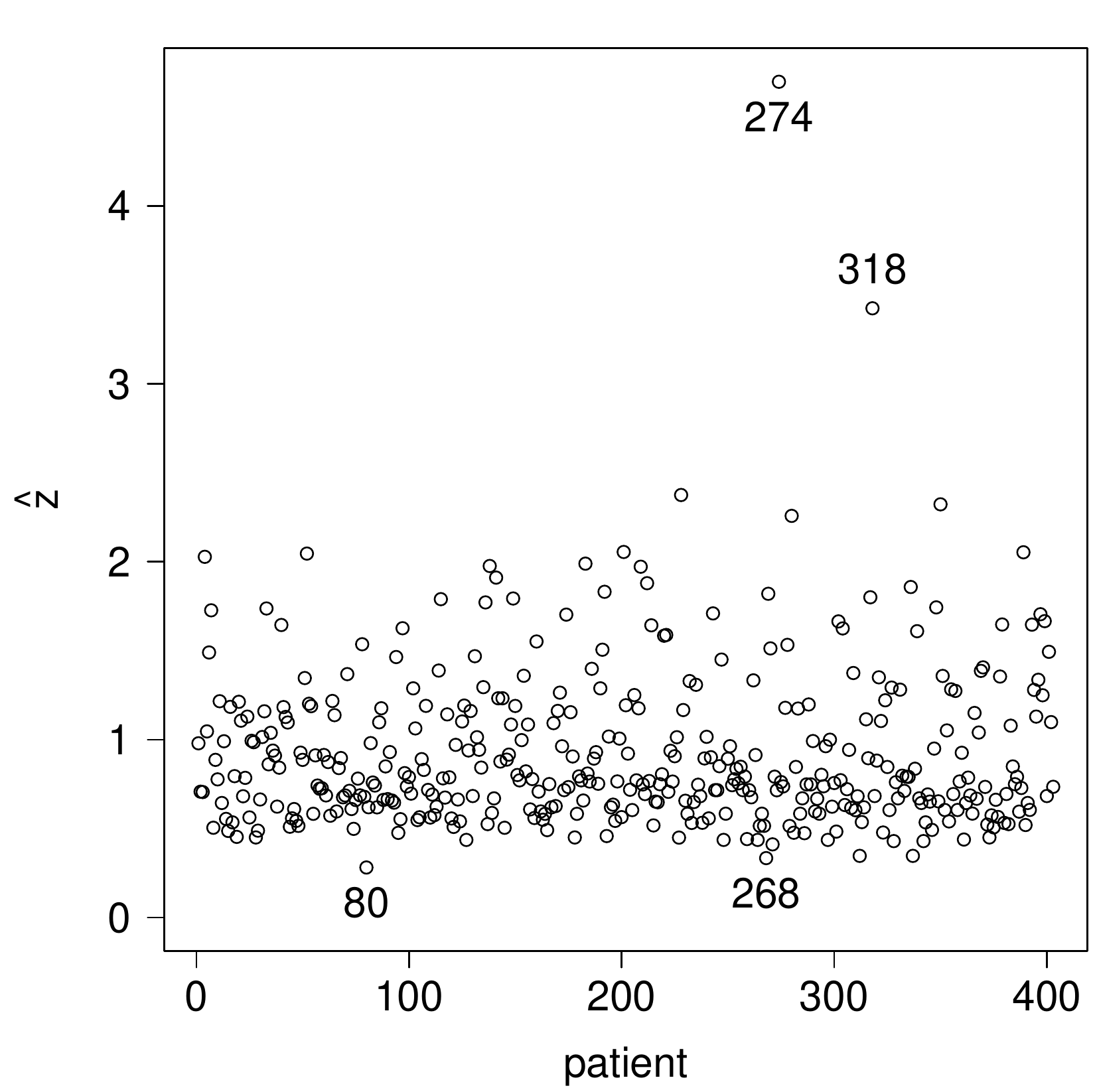}
\caption{Estimated frailty terms for each cluster (patient) in readmission data set using the semi-WL model.}
\label{frailty.semiWL}
\end{figure}

\begin{table}
\caption{Descriptive analysis for rehospitalization times for selected patients. $Q_j$ denotes the $j$-th quartile.} \label{App.readmission}
\begin{center}
\begin{tabular}{ccccl}
\hline
           &           \multicolumn{ 3}{c}{Rehospitalization times} &            \\

   Patient &         $Q_1$ &         $Q_2$ &         $Q_3$ &        $n_i$ \\
\hline

        80 &      5.276 &      5.276 &      5.276 &          1 \\

       268 &      5.073 &      5.073 &      5.073 &          1 \\

       274 &   0.006 &   0.005 &   0.008 &         17 \\

       318 &   0.015 &   0.019 &   0.041 &          7 \\

       all & 0.613 & 1.565 & 3.153 &       2* \\
\hline
\multicolumn{5}{l}{*\scriptsize{represents the median of the measures for all the patients.}}\\
\hline
\end{tabular}
\end{center}
\end{table}

\begin{figure}[!h]
\begin{minipage}[b]{0.46\linewidth}
\centering
\includegraphics[width=7cm]{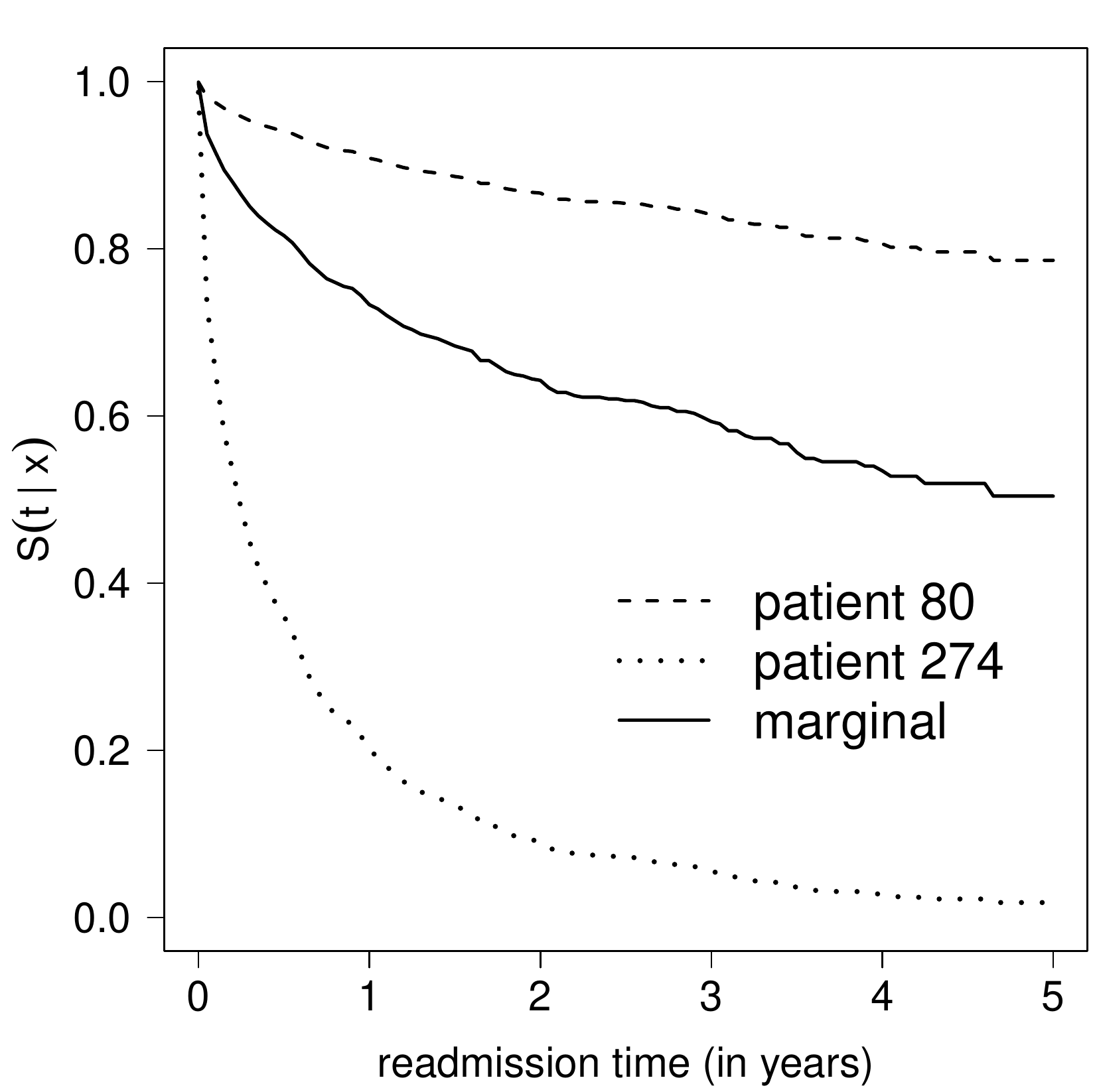}
\caption*{(a)}
\end{minipage} 
\hspace{0.3cm}
\begin{minipage}[b]{0.46\linewidth}
\centering
\includegraphics[width=7cm]{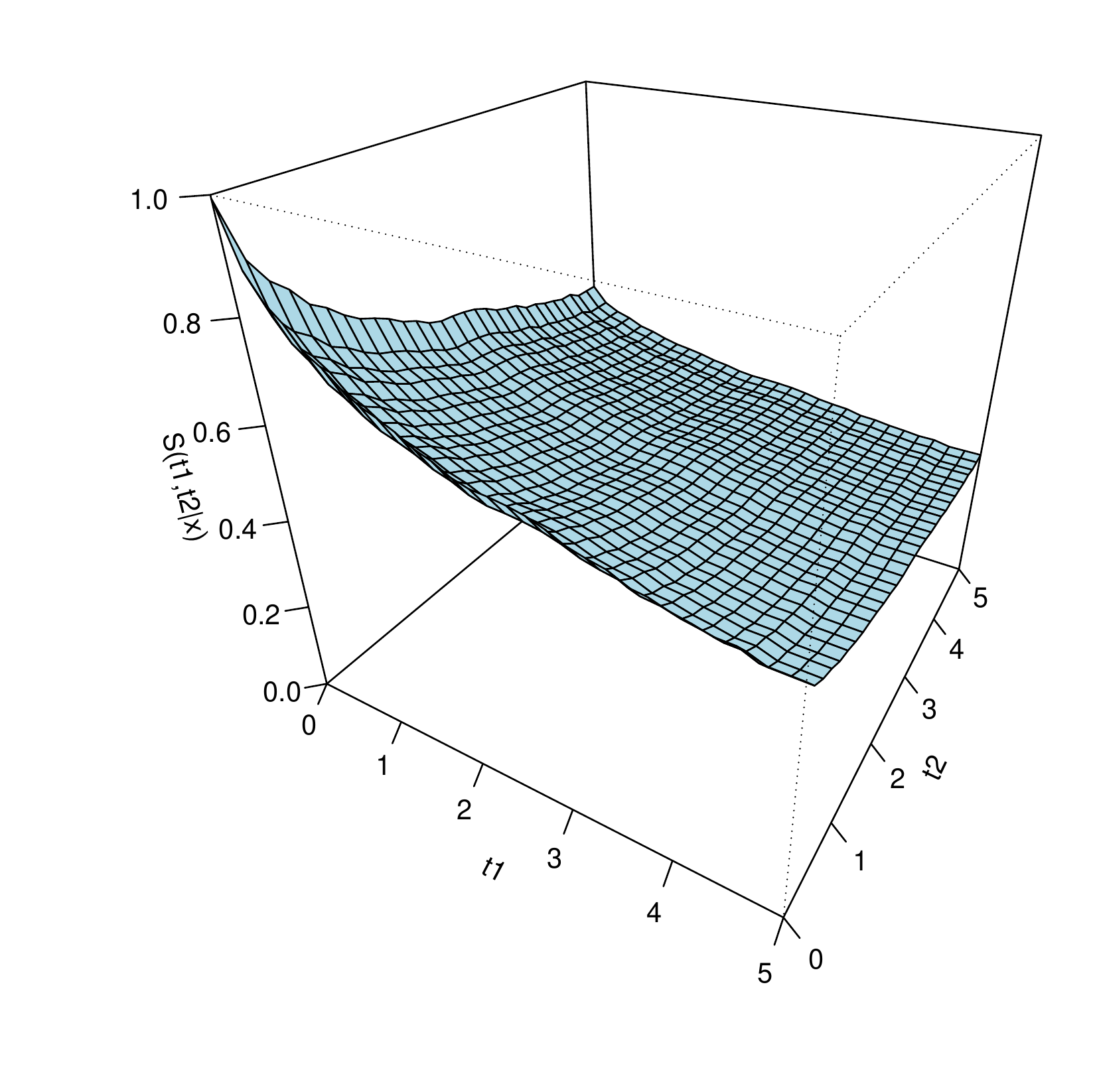}
\caption*{(b)}
\end{minipage}
\caption{(a) estimated conditional survival function for one readmission time based on the frailty WL semi-parametric model for patients 80 and 274 and the unconditional survival function for the profile \texttt{duke}=C, \texttt{charlon}=0, \texttt{sex}=female and \texttt{chemo}=treated. (b) Unconditional bivariate survival function based on the semi-WL and the same profile.}
\label{survival.app}
\end{figure}

\newpage

\section{Concluding remarks}
\noindent

In this paper, we have proposed a novel parametric (the Weibull hazard function was selected as the baseline)
and semi-parametric frailty model with WL frailty
distribution for modeling unobserved heterogeneity in cluster failure time data, which enjoy mathematical tractability
(has a simple Laplace transform) like the gamma frailty model. In particular, the WL distribution with unit mean and variance
$\sigma^2$ is used as the frailty distribution.
The semi-parametric choice of baseline hazard function provides a robust and flexible way to model the data.
We get a closed form expression for the derivatives of the Laplace transform for the WL distribution. Furthermore,
the conditional distributions of frailties among the survivors and the frailty of individuals dying
at time $t$ were determined explicitly.
A Monte Carlo simulation study has shown that the estimates based on the EM-algorithm of
the model parameters tend to their true values for both parametric and semiparametric cases.
In addition, to estimate the frailty variance it is better to perform more intra clusters than inter clusters observations.

Finally, we fitted the proposed regression model to a real dataset on rehospitalization
times after surgery in patients diagnosed with colorectal cancer to show the potential of using the new methodology. This
application also demonstrates the practical relevance of the new frailty model.
From the illustrative example analyzed, the WL frailty model is seen to be quite
robust in estimating the covariate effects as well as the frailty variance.
Mathematical tractability, flexibility, properties simplicity and computationally attractive of the
WL frailty model make the proposed model a competitive one among many models that already exist.
In this context, we see that the WL frailty model can also be useful in applications.
As part of future research, we plan to explore other estimation methods for the model,
including, for instance, the Bayesian approach. Furthermore, the model can be extended to the case of time-varying frailty and the WL frailty model with cure fraction.

\section*{Appendix}
\noindent In this Appendix we detail the proofs related to Propositions \ref{prop1} and \ref{prop2}.

\subsection{Proof of Proposition \ref{prop1}}
\noindent The conditional density of
$Z|T > t$ is
\begin{eqnarray*}
f(z|T> t) &=& \frac{f(z)S(t|z)}{S(t)} \\\\
&=& \frac{\theta \, a_\theta^{-b_\theta-1}z^{b_\theta - 1}(1+z)\exp(-z/a_\theta)\exp(-z\,\Lambda_0(t))}{2\,\Gamma(b_\theta)[1+a_\theta\,H_0(t)]^{-b_\theta-1}(1+\theta\,\Lambda_0(t)/2)}\\\\
&=&\frac{\theta \, A_\theta^{-b_\theta-1}z^{b_\theta - 1}(1+z)\exp(-z/A_\theta)}{2\,\Gamma(b_\theta)(1+\theta\,\Lambda_0(t)/2)} \\\\
&=& \frac{A_\theta^{-b_\theta-1}}{(A_\theta^{-1} + b_\theta)\Gamma(b_\theta)}z^{b_\theta - 1}(1+z)\exp(-z/A_\theta), \quad {z}>0,
\end{eqnarray*}
where $A_\theta = a_\theta/(1+a_\theta\,\Lambda_0(t))$ and $A_\theta^{-1} + b_\theta = (2 + \theta\,\Lambda_0(t))/\theta$. This provides that $Z|T > t \sim \textrm{WL}(b_\theta, A_\theta^{-1})$.

\subsection{Proof of Proposition \ref{prop2}}
\noindent

The conditional
density of $Z|T = t$ is given by
\begin{eqnarray*}
f(z|T = t) &=& \frac{f(z)f(t|z)}{f(t)} \\\\
&=& \frac{\theta \, a_\theta^{-b_\theta-1}z^{b_\theta - 1}(1+z)\exp(-z/a_\theta)z\,\lambda_0(t)\,\exp(-z\,\Lambda_0(t))}{2\,\Gamma(b_\theta)\lambda_0(t)[1+a_\theta\,\Lambda_0(t)]^{-b_\theta-2}(1+\theta\,\Lambda_0(t)/(\theta+2))}\\\\
&=&\frac{\theta \, a_\theta \,A_\theta^{-b_\theta-2}z^{b_\theta}(1+z)\exp(-z/A_\theta)}{2(\theta+2)^{-1}\,\Gamma(b_\theta)(\theta + 2+\theta\,\Lambda_0(t))} \\\\
&=&\frac{\theta \,A_\theta^{-b_\theta-2}z^{b_\theta}(1+z)\exp(-z/A_\theta)}{\Gamma(b_\theta + 1)(\theta + 2+\theta\,\Lambda_0(t))} \\\\
&=& \frac{A_\theta^{-b_\theta-2}}{(A_\theta^{-1} + b_\theta + 1)\Gamma(b_\theta + 1)}z^{b_\theta}(1+z)\exp(-z/A_\theta), \quad {z}>0,
\end{eqnarray*}
where $A_\theta = a_\theta/(1+a_\theta\,\Lambda_0(t))$, $a_\theta b_\theta = 2/(2 + \theta)$ and $A_\theta^{-1} + b_\theta + 1 = (\theta + 2 + \theta\,\Lambda_0(t))/\theta$. This provides that $Z|T = t \sim \textrm{WL}(b_\theta + 1, A_\theta^{-1})$.

\bibliographystyle{elsarticle-harv}
\bibliography{references}




\end{document}